\numberwithin{figure}{section}
\newtheorem{theorem}{Theorem}[section]
\newtheorem{lemma}{Lemma}[section]
\newtheorem{claim}{Claim}[section]
\newtheorem{definition}{Definition}[section]
\newenvironment{proof}{{\sc Proof. }}{\hfill$\Box$\vspace{0.2in}}
\def\mcC{\mathcal{C}}
\def\mcQ{\mathcal{Q}}
\title{A local search $4/3$-approximation algorithm for the minimum $3$-path partition problem}
\author{Yong~Chen\thanks{\texttt{Emails:\{chenyong,anzhang\}@hdu.edu.cn}.
	Department of Mathematics, Hangzhou Dianzi University.  Hangzhou, Zhejiang, China.}
	\and
	Randy~Goebel\thanks{\texttt{Emails:\{rgoebel,guohui,xu2\}@ualberta.ca}.
	Department of Computing Science, University of Alberta. Edmonton, Alberta T6G 2E8, Canada.}
	\and
	Guohui~Lin$^{\dagger}$\thanks{Correspondence authors.}
	\and
	Longcheng~Liu\thanks{\texttt{Email:longchengliu@xmu.edu.cn}.
	School of Mathematical Sciences, Xiamen University. Xiamen, Fujian, China.} $^{\dagger}$
	\and
	Bing~Su\thanks{\texttt{Email:subing684@sohu.com}.
	School of Economics and Management, Xi'an Technological University. Xi'an, Shaanxi, China.}
	\and
	Weitian~Tong\thanks{\texttt{Email:wtong@georgiasouthern.edu}.
	Department of Computer Science, Georgia Southern University. Statesboro, Georgia, USA.} $^{\dagger}$
	\and
	Yao~Xu$^{\dagger}$$^{\ddagger}$
	\and
	An~Zhang$^*$}%
\date{\today}
\begin{document}
\maketitle

\begin{abstract}
Given a graph $G = (V, E)$, the $3$-path partition problem is to find a minimum collection of vertex-disjoint paths
each of order at most $3$ to cover all the vertices of $V$.
It is different from but closely related to the well-known $3$-set cover problem.
The best known approximation algorithm for the $3$-path partition problem was proposed recently and has a ratio $13/9$.
Here we present a local search algorithm 
and show, by an amortized analysis, that it is a $4/3$-approximation.
This ratio matches up to the best approximation ratio for the $3$-set cover problem.

\paragraph{Keywords:}
$k$-path partition; path cover; $k$-set cover; approximation algorithms; amortized analysis 
\end{abstract}

\section{Introduction}
Motivated by the data integrity of communication in wireless sensor networks and several other applications,
the {\sc $k$-path partition} ($k$PP) problem was first considered by Yan et al.~\cite{YCH97}.
Given a simple graph $G = (V, E)$ (we consider only simple graphs), with $n = |V|$ and $m = |E|$,
the {\em order} of a simple path in $G$ is the number of vertices on the path and it is called a {\em $k$-path} if its order is $k$.
The $k$PP problem is to find a minimum collection of vertex-disjoint paths each of order at most $k$
such that every vertex is on some path in the collection.

Clearly, the $2$PP problem is exactly the {\sc Maximum Matching} problem, which is solvable in $O(m \sqrt{n} \log(n^2/m)/\log n)$-time~\cite{GK04}.
For each $k \ge 3$, $k$PP is NP-hard~\cite{GJ79}.
We point out the key phrase ``at most $k$'' in the definition,
that ensures the existence of a feasible solution for any given graph;
on the other hand, if one asks for a path partition in which every path has an order exactly $k$,
the problem is called {\em $P_k$-partitioning} and is also NP-complete for any fixed constant $k \ge 3$~\cite{GJ79},
even on bipartite graphs of maximum degree three~\cite{MT07}.
To the best of our knowledge, there is no approximation algorithm with proven performance for the general $k$PP problem,
except the trivial $k$-approximation using all $1$-paths.
For {\sc $3$PP}, Monnot and Toulouse~\cite{MT07} proposed a $3/2$-approximation, based on two maximum matchings;
recently, Chen et al.~\cite{CGL18} presented an improved $13/9$-approximation.

The $k$PP problem is a generalization to the {\sc Path Cover} problem~\cite{FR02} (also called {\sc Path Partition}),
which is to find a minimum collection of vertex-disjoint paths which together cover all the vertices in $G$.
{\sc Path Cover} contains the {\sc Hamiltonian Path} problem~\cite{GJ79} as a special case,
and thus it is NP-hard and it is outside APX unless P = NP.

The $k$PP problem is also closely related to the well-known {\sc Set Cover} problem.
Given a collection of subsets $\mcC = \{S_1, S_2, \ldots, S_m\}$ of a finite ground set $U = \{x_1, x_2, \ldots, x_n\}$,
an element $x_i \in S_j$ is said to be {\em covered} by the subset $S_j$, and
a {\em set cover} is a collection of subsets which together cover all the elements of the ground set $U$.
The {\sc Set Cover} problem asks to find a minimum set cover.
{\sc Set Cover} is one of the first problems proven to be NP-hard~\cite{GJ79},
and is also one of the most studied optimization problems for the approximability~\cite{Joh74} and inapproximability~\cite{RS97,Fei98,Vaz01}.
The variant of {\sc Set Cover} in which every given subset has size at most $k$ is called {\sc $k$-Set Cover},
which is APX-complete and admits a $4/3$-approximation for $k = 3$~\cite{DF97}
and an $(H_k - \frac {196}{390})$-approximation for $k \ge 4$~\cite{Lev06}.

To see the connection between $k$PP and {\sc $k$-Set Cover},
we may take the vertex set $V$ of the given graph as the ground set, and an $\ell$-path with $\ell \le k$ as a subset;
then the $k$PP problem is the same as asking for a minimum {\em exact} set cover.
That is, the $k$PP problem is a special case of the {\em minimum} {\sc Exact Cover} problem~\cite{Kar72},
for which unfortunately there is no approximation result that we may borrow.
Existing approximations for (non-exact) {\sc $k$-Set Cover} do not readily apply to $k$PP,
because in a feasible set cover, an element of the ground set could be covered by multiple subsets.
There is a way to enforce the {\em exactness} requirement in the {\sc Set Cover} problem,
by expanding $\mcC$ to include all the proper subsets of each given subset $S_j \in \mcC$.
But in an instance graph of $k$PP, not every subset of vertices on a path is traceable,
and so such an expanding technique does not apply.
In summary, $k$PP and {\sc $k$-Set Cover} share some similarities, but none contains the other as a special case.

In this paper, we study the {\sc $3$PP} problem.
The authors of the $13/9$-approx-imation~\cite{CGL18}
first presented an $O(nm)$-time algorithm to compute a $k$-path partition with the least $1$-paths, for any $k \ge 3$;
then they applied an $O(n^3)$-time greedy approach to merge three $2$-paths into two $3$-paths whenever possible.
We aim to design better approximations for {\sc $3$PP} with provable performance,
and we achieve a $4/3$-approximation.
Our algorithm starts with a $3$-path partition with the least $1$-paths,
then it applies a local search scheme to repeatedly search for an {\em expected collection} of $2$- and $3$-paths 
and replace it by a strictly smaller {\em replacement collection} of new $2$- and $3$-paths.

The rest of the paper is organized as follows.
In Section~\ref{sec3} we present the local search scheme
searching for all the expected collections of $2$- and $3$-paths.
The performance of the algorithm is proved through an amortized analysis in Section~\ref{sec4}, where we also provide a tight instance.
We conclude the paper in Section~\ref{sec5}.

\section{A local search approximation algorithm} \label{sec3}
The $13/9$-approximation proposed by Chen et al.~\cite{CGL18} applies only one replacement operation,
which is to merge three $2$-paths into two $3$-paths.
In order to design an approximation for {\sc $3$PP} with better performance,
we examine four more replacement operations, each transfers three $2$-paths to two $3$-paths with the aid of a few other $2$- or $3$-paths.
Starting with a $3$-path partition with the least $1$-paths,
our approximation algorithm repeatedly finds a certain expected collection of $2$- and $3$-paths and
replaces it by a replacement collection of one less new $2$- and $3$-paths,
in which the net gain is exactly one.

In Section \ref{sec3.2} we present all the replacement operations to perform on the $3$-path partition with the least $1$-paths.
The complete algorithm, denoted as {\sc Approx}, is summarized in Section \ref{sec3.3}.

\subsection{Local operations and their priorities} \label{sec3.2}
Throughout the local search, the $3$-path partitions are maintained to have the least $1$-paths.
Our four local operations are designed so not to touch the $1$-paths, ensuring that the final $3$-path partition still contains the least $1$-paths.
These operations are associated with different priorities, that is,
one operation applies only when all the other operations of higher priorities (labeled by smaller numbers) fail to apply to the current $3$-path partition.%
\footnote{We remark that the priorities are set up to ease the presentation and the amortized analysis.}
We remind the reader that the local search algorithm is iterative,
and every iteration ends after executing a designed local operation. 
The algorithm terminates when none of the designed local operations applies.

\begin{definition}
\label{def01}
With respect to the current $3$-path partition $\mcQ$, a local {\sc Operation $i_1$-$i_2$-By-$j_1$-$j_2$},
where $j_1 = i_1 - 3$ and $j_2 = i_2 + 2$,
replaces an {\em expected collection} of $i_1$ $2$-paths and $i_2$ $3$-paths of $\mcQ$ 
by a {\em replacement collection} of $j_1$ $2$-paths and $j_2$ $3$-paths on the same subset of $2i_1 + 3i_2$ vertices. 
\end{definition}

In the rest of this section we determine the configurations for all the expected collections and their priorities, respectively.

\subsubsection{{\sc Operation $3$-$0$-By-$0$-$2$}, highest priority $1$}
When three $2$-paths of $\mcQ$ can be connected into a $6$-path in the graph $G$ (see Figure~\ref{fig21} for an illustration),
they form into an expected collection.
By removing the middle edge on the $6$-path, we achieve two $3$-paths on the same six vertices and they form the replacement collection.
In the example illustrated in Figure~\ref{fig21}, 
with the two edges $(u_1, v_2), (u_2, v_3) \in E$ outside of $\mcQ$,
{\sc Operation $3$-$0$-By-$0$-$2$} replaces the three $2$-paths $u_1$-$v_1$, $u_2$-$v_2$, and $u_3$-$v_3$ of $\mcQ$ by two new $3$-paths $v_1$-$u_1$-$v_2$ and $u_2$-$v_3$-$u_3$.

\begin{figure}[ht]
\centering
\includegraphics[width=.18\linewidth]{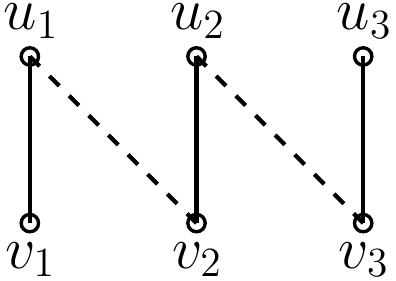}
\caption{The configuration of the expected collection for {\sc Operation $3$-$0$-By-$0$-$2$}, which has the highest priority $1$,
where solid edges are in $\mcQ$ and dashed edges are in $E$ but outside of $\mcQ$.\label{fig21}}
\end{figure}

We point out that {\sc Operation $3$-$0$-By-$0$-$2$} is the only local operation executed in the $13/9$-approximation~\cite{CGL18}.

An {\sc Operation $3$-$0$-By-$0$-$2$} does not need the assistance of any $3$-path of $\mcQ$.
In each of the following operations, we need the aid of one or two $3$-paths of $\mcQ$ to transfer three $2$-paths to two $3$-paths.
We first note that for a $3$-path $u$-$w$-$v \in {\cal Q}$, if $(u, v) \in E$ too,
then if desired, we may {\em rotate} $u$-$w$-$v$ into another $3$-path with $w$ being an endpoint (see Figure~\ref{fig22} for an illustration).
In the following, any $3$-path in an expected collection can be either the exact one in $\mcQ$ or the one rotated from a $3$-path in $\mcQ$.

\begin{figure}[ht]
\centering
\includegraphics[width=.29\linewidth]{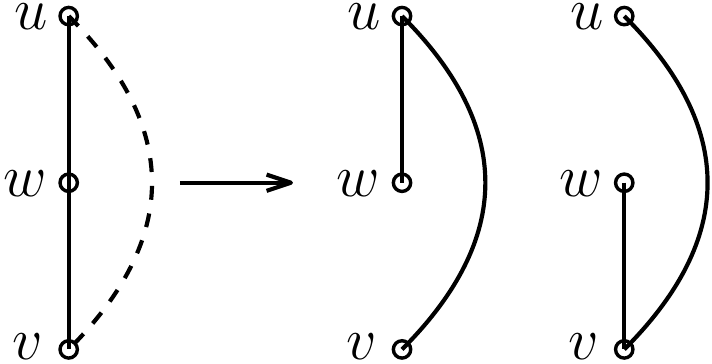}
\caption{A $3$-path $u$-$w$-$v \in \mcQ$ can be rotated so that $w$ becomes an endpoint if $(u, v) \in E$,
where solid edges are in $\mcQ$ and dashed edges are in $E$ but outside of $\mcQ$.\label{fig22}}
\end{figure}

\subsubsection{{\sc Operation $3$-$1$-By-$0$-$3$}, priority $2$}
Consider an expected collection of three $2$-paths $P_1 = u_1$-$v_1$, $P_2 = u_2$-$v_2$, $P_3 = u_3$-$v_3$, and a $3$-path $P_4 = u$-$w$-$v$ in $\mcQ$.
Note that an {\sc Operation $3$-$1$-By-$0$-$3$} applies only when {\sc Operation $3$-$0$-By-$0$-$2$} fails to apply to the current $\mcQ$,
that is, $P_1$, $P_2$, $P_3$ cannot be connected into a $6$-path.%
\footnote{This is one of the places where the priorities ease the presentation, by excluding some cases for discussion.}
We identify only the following two classes of configurations for the expected collection in an {\sc Operation $3$-$1$-By-$0$-$3$}.

In the first class, which has priority $2.1$, 
$u, w, v$ are adjacent to an endpoint of $P_1, P_2, P_3$ in $G$, respectively (see Figure~\ref{fig23} for an illustration).
The operation breaks the $3$-path $u$-$w$-$v$ into three singletons and connects each of them to the respective $2$-path to form the replacement collection of three new $3$-paths.
In the example illustrated in Figure~\ref{fig23},
{\sc Operation $3$-$1$-By-$0$-$3$} replaces the expected collection by three new $3$-paths $u$-$u_1$-$v_1$, $w$-$u_2$-$v_2$, and $v$-$u_3$-$v_3$.

\begin{figure}[ht]
\centering
\includegraphics[width=.2\linewidth]{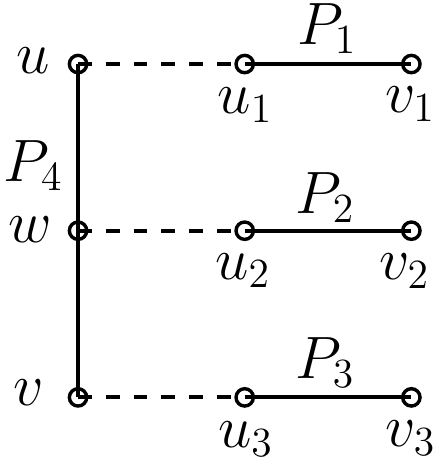}
\caption{The first class of configurations of the expected collection for {\sc Operation $3$-$1$-By-$0$-$3$}, which has priority $2.1$,
where solid edges are in $\mcQ$ and dashed edges are in $E$ but outside of $\mcQ$.\label{fig23}}
\end{figure}

In the second class, which has priority $2.2$,
two of the three $2$-paths, say $P_1$ and $P_2$, are adjacent and thus they can be replaced by a new $3$-path and a singleton.
We distinguish two configurations in this class (see Figure~\ref{fig24} for illustrations).
In the first configuration,
the singleton is adjacent to the midpoint $w$ and $P_3$ is adjacent to one of $u$ and $v$;
in the second configuration,
the singleton and $P_3$ are adjacent to $u$ and $v$, respectively.
For an expected collection of any of the two configurations, the operation replaces it by three new $3$-paths.

In the example illustrated in Figure~\ref{fig24a},
the singleton is $u_1$ and $P_3$ is adjacent to $u$.
{\sc Operation $3$-$1$-By-$0$-$3$} replaces the expected collection by three new $3$-paths $v_1$-$u_2$-$v_2$, $v$-$w$-$u_1$, and $u$-$u_3$-$v_3$.
In the example illustrated in Figure~\ref{fig24b},
the singleton is $u_1$ and $P_3 = u_3$-$v_3$ is adjacent to $u$.
{\sc Operation $3$-$1$-By-$0$-$3$} replaces the expected collection by three new $3$-paths $v_1$-$u_2$-$v_2$, $w$-$v$-$u_1$, and $u$-$u_3$-$v_3$.

\begin{figure}[ht]
\centering
\captionsetup[subfigure]{justification=centering}
\begin{subfigure}{0.4\textwidth}
\centering
\includegraphics[width=.56\linewidth]{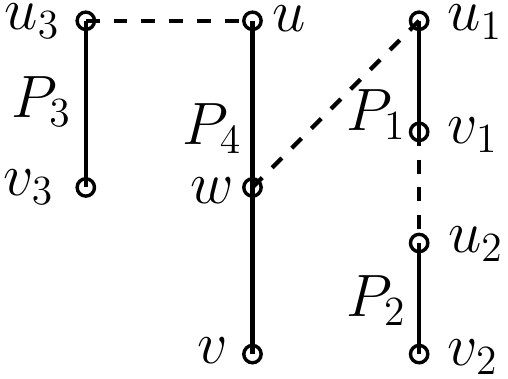}
\caption{\label{fig24a}}
\end{subfigure}
\begin{subfigure}{0.4\textwidth}
\centering
\includegraphics[width=.56\linewidth]{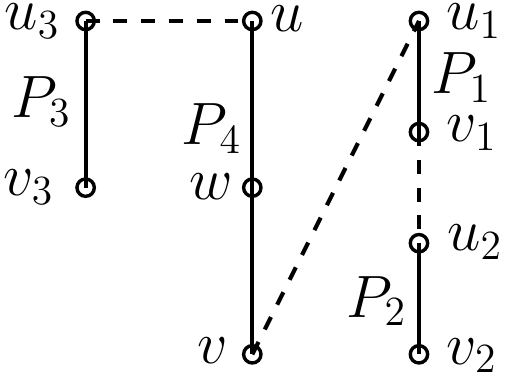}
\caption{\label{fig24b}}
\end{subfigure}
\caption{The second class of configurations of the expected collection in {\sc Operation $3$-$1$-By-$0$-$3$}, which has priority $2.2$,
where solid edges are in $\mcQ$ and dashed edges are in $E$ but outside of $\mcQ$.\label{fig24}}
\end{figure}

\subsubsection{{\sc Operation $4$-$1$-By-$1$-$3$}, priority $3$}
Consider an expected collection of four $2$-paths $P_1 = u_1$-$v_1$, $P_2 = u_2$-$v_2$, $P_3 = u_3$-$v_3$, $P_4 = u_4$-$v_4$,
and a $3$-path $P_5 = u$-$w$-$v$ in $\mcQ$.
Note that an {\sc Operation $4$-$1$-By-$1$-$3$} applies only when {\sc Operation $3$-$0$-By-$0$-$2$} and {\sc Operation $3$-$1$-By-$0$-$3$}
both fail to apply to the current $\mcQ$.%
\footnote{This is another place where the priorities ease the presentation, by excluding some cases for discussion.}
We thus consider only the cases when the four $2$-paths can be separated into two pairs, each of which are adjacent in the graph $G$,
and we can replace them by two new $3$-paths while leaving two singletons which are adjacent to a common vertex on $P_5$.

In the configuration for the expected collection in an {\sc Operation $4$-$1$-By-$1$-$3$}, 
the two singletons are adjacent to a common endpoint, say $u$, of $P_5$ (see Figure~\ref{fig25} for an illustration),
then they can be replaced by a new $2$-path $v$-$w$ and a new $3$-path.
Overall, the operation replaces the expected collection by three new $3$-paths and a new $2$-path.
In the example illustrated in Figure~\ref{fig25},
the two singletons are $u_1$ and $u_3$, and they are both adjacent to $u$.
{\sc Operation $4$-$1$-By-$1$-$3$} replaces the expected collection by three new $3$-paths 
$v_1$-$u_2$-$v_2$, $v_3$-$u_4$-$v_4$, $u_1$-$u$-$u_3$,
and a new $2$-path $w$-$v$.

\begin{figure}[ht]
\centering
\includegraphics[width=.33\linewidth]{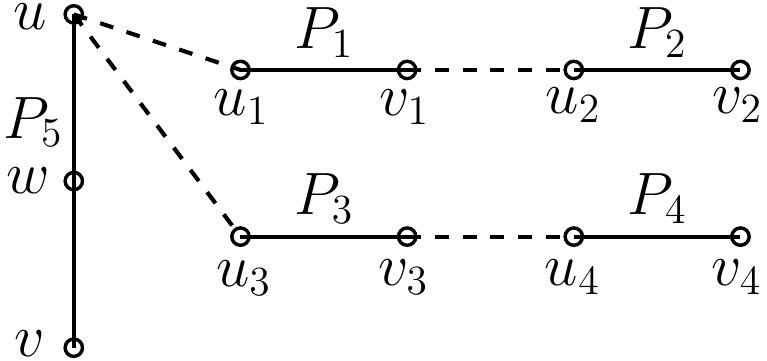}
\caption{The configuration of the expected collection for {\sc Operation $4$-$1$-By-$1$-$3$}, which has priority $3$,
where solid edges are in $\mcQ$ and dashed edges are in $E$ but outside of $\mcQ$.\label{fig25}}
\end{figure}

\subsubsection{{\sc Operation $4$-$2$-By-$1$-$4$}, lowest priority $4$}
Consider an expected collection of four $2$-paths $P_1 = u_1$-$v_1$, $P_2 = u_2$-$v_2$, $P_3 = u_3$-$v_3$, $P_4 = u_4$-$v_4$,
and two $3$-paths $P_5 = u$-$w$-$v$, $P_6 = u'$-$w'$-$v'$ in $\mcQ$.
The four $2$-paths can be separated into two pairs, each of which are adjacent in the graph $G$,
thus we can replace them by two new $3$-paths while leaving two singletons, 
which are adjacent to $P_5$ and $P_6$, respectively (see Figure~\ref{fig26} for illustrations).
We distinguish three classes of configurations for the expected collection in this operation,
for which the replacement collection consists of four new $3$-paths and a new $2$-path.

In the first class, the two singletons are adjacent to $P_5$ and $P_6$ at endpoints, say $u$ and $u'$, respectively;
additionally, one of the five edges
$(u, v')$, $(v, u')$, $(w, v')$, $(v, w')$, $(v, v')$ is in $E$ (see Figure~\ref{fig26a} for an illustration).
In the example illustrated in the Figure~\ref{fig26a}, if $(u, v') \in E$,
then {\sc Operation $4$-$1$-By-$1$-$3$} replaces the expected collection by 
four new $3$-paths $v_1$-$u_2$-$v_2$, $v_3$-$u_4$-$v_4$, $u_1$-$u$-$v'$, $u_3$-$u'$-$w'$,
and a new $2$-path $w$-$v$.

\begin{figure}[ht]
\centering
\captionsetup[subfigure]{justification=centering}
\begin{subfigure}{0.32\textwidth}
\centering
\includegraphics[width=.78\linewidth]{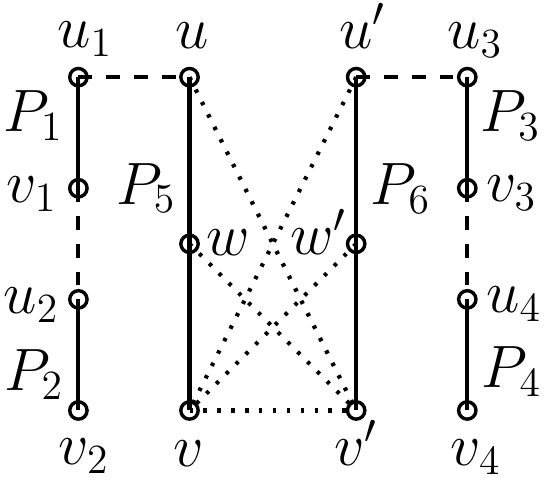}
\caption{The first class.\label{fig26a}}
\end{subfigure}
\hskip .17cm
\begin{subfigure}{0.32\textwidth}
\centering
\includegraphics[width=.78\linewidth]{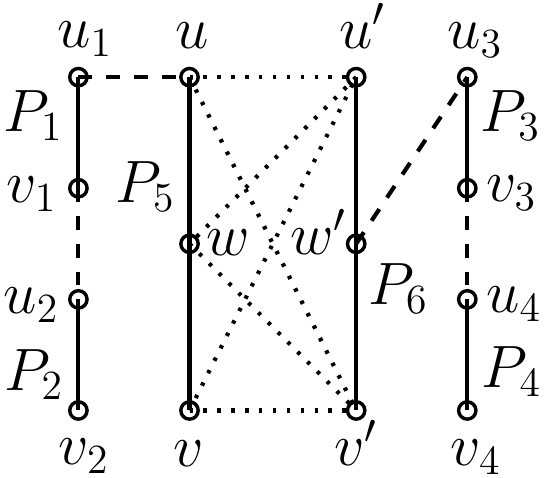}
\caption{The second class.\label{fig26b}}
\end{subfigure}
\hskip .17cm
\begin{subfigure}{0.32\textwidth}
\centering
\includegraphics[width=.78\linewidth]{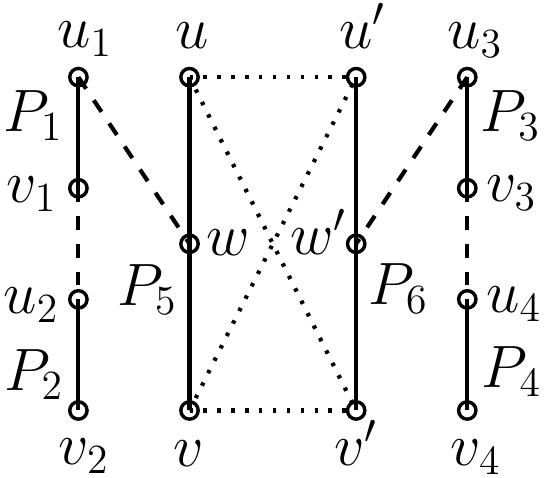}
\caption{The third class.\label{fig26c}}
\end{subfigure}

\caption{The three classes of configurations of the expected collections for an {\sc Operation $4$-$2$-By-$1$-$4$},
	where solid edges are in $\mcQ$, dashed edges are in $E$ but outside of $\mcQ$, and dotted edges could be in $E$ but outside of $\mcQ$.
	In every class, each dotted edge between $P_5$ and $P_6$ corresponds to one configuration.\label{fig26}}
\end{figure}

In the second class, one singleton is adjacent to an endpoint of a $3$-path, say $u$ on $P_5$, and the other singleton is adjacent to the midpoint $w'$ of $P_6$;
additionally, one of the six edges
$(u, u')$, $(u, v')$, $(w, u')$, $(w, v')$, $(v, u')$, $(v, v')$, is in $E$ (see Figure~\ref{fig26b} for an illustration).
In the example illustrated in Figure~\ref{fig26b}, if $(u, u') \in E$,
then {\sc Operation $4$-$1$-By-$1$-$3$} replaces the expected collection by 
four new $3$-paths $v_1$-$u_2$-$v_2$, $v_3$-$u_4$-$v_4$, $u_1$-$u$-$u'$, $u_3$-$w'$-$v'$,
and a new $2$-path $w$-$v$.

In the third class, the two singletons are adjacent to the midpoints of the two $3$-paths, $w$ and $w'$, respectively;
additionally, one of the four edges
$(u, u')$, $(u, v')$, $(v, u')$, $(v, v')$ is in $E$ (see Figure~\ref{fig26c} for an illustration).
In the example illustrated in Figure~\ref{fig26c}, if $(u, u') \in E$,
then {\sc Operation $4$-$1$-By-$1$-$3$} replaces the expected collection by 
four new $3$-paths $v_1$-$u_2$-$v_2$, $v_3$-$u_4$-$v_4$, $u_1$-$w$-$v$, $u_3$-$w'$-$v'$,
and a new $2$-path $u$-$u'$.

\subsection{The complete local search algorithm {\sc Approx}} \label{sec3.3}
The first step of our local search algorithm {\sc Approx} is to compute a $3$-path partition $\mcQ$ with the least $1$-paths.
The second step is iterative,
and in each iteration the algorithm tries to apply one of the four local operations, from the highest priority to the lowest,
by finding a corresponding expected collection and determining the subsequent replacement collection.
When no expected collection can be found, the second step terminates.
The algorithm outputs the achieved $3$-path partition $\mcQ$ as the solution.
A high-level description of the complete algorithm {\sc Approx} is depicted in Figure~\ref{fig27}.

\begin{figure}[H]
\begin{center}
\framebox{
\begin{minipage}{5.5in}
Algorithm {\sc Approx} on $G = (V, E)$:
\begin{description}
\parskip=0pt
\item[Step 1.]
	Compute a $3$-path partition $\mcQ$ with the least $1$-paths in $G$;
\item[Step 2.] Iteratively perform:\\
	2.1. if {\sc Operation $3$-$0$-By-$0$-$2$} applies, update $\mcQ$ and break;\\
	2.2. if {\sc Operation $3$-$1$-By-$0$-$3$} with priority $2.1$ applies, update $\mcQ$ and break;\\
	2.3. if {\sc Operation $3$-$1$-By-$0$-$3$} with priority $2.2$ applies, update $\mcQ$ and break;\\
	2.4. if {\sc Operation $4$-$1$-By-$1$-$3$} applies, update $\mcQ$ and break;\\
	2.5. if {\sc Operation $4$-$2$-By-$1$-$4$} applies, update $\mcQ$ and break;
\item[Step 3.] Return $\mcQ$.
\end{description}
\end{minipage}}
\end{center}
\caption{A high-level description of the local search algorithm {\sc Approx}.\label{fig27}}
\end{figure}

Step 1 runs in $O(nm)$ time~\cite{CGL18}.
Note that there are $O(n)$ $2$-paths and $O(n)$ $3$-paths in $\mcQ$ at the beginning of Step 2,
and therefore there are $O(n^6)$ {\em original} candidate collections to be examined, since a candidate collection has a maximum size of $6$.
When a local operation applies, an iteration ends and the $3$-path partition $\mcQ$ reduces its size by $1$, while introducing at most $5$ new $2$- and $3$-paths.
These new $2$- and $3$-paths give rise to $O(n^5)$ {\em new} candidate collections to be examined in the subsequent iterations.
Since there are at most $n$ iterations in Step 2, we conclude that the total number of original and new candidate collections examined in Step 2 is $O(n^6)$.
Determining whether a candidate collection is an expected collection, and if so, deciding the corresponding replacement collection, can be done in $O(1)$ time.
We thus prove that the overall running time of Step 2 is $O(n^6)$, and consequently prove the following theorem.

\begin{theorem}
\label{thm2.5}
The running time of the algorithm {\sc Approx} is in $O(n^6)$.
\end{theorem}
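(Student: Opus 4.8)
The plan is to account for all of the work done in the two steps of \textsc{Approx} and add them up. Step~1 is already cited to run in $O(nm)$ time, so the entire burden falls on Step~2, whose running time I would bound by (number of candidate collections ever examined) $\times$ (cost per candidate). First I would fix terminology: a \emph{candidate collection} is any set of at most six paths from the current $\mcQ$ that could conceivably match one of the five operation templates in \cref{def01} and Section~\ref{sec3.2}; checking whether a candidate is actually an expected collection amounts to testing a constant number of edge memberships in $E$ and, if it matches, writing down the constant-size replacement collection, so this is $O(1)$ per candidate.

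Next I would count candidate collections. At the start of Step~2 there are $O(n)$ $2$-paths and $O(n)$ $3$-paths in $\mcQ$ (indeed at most $n/2$ and $n/3$ respectively, but $O(n)$ suffices), and since a candidate collection has at most six members, the number of \emph{original} candidate collections is $O(n^6)$. The subtle point — and the one I'd treat most carefully — is bounding the candidates created during the iterations. Each successful operation reduces $|\mcQ|$ by exactly $1$ (by \cref{def01}, $j_1 + j_2 = (i_1 - 3) + (i_2 + 2) = i_1 + i_2 - 1$) while introducing at most five \emph{new} $2$- and $3$-paths into $\mcQ$. A new candidate collection in a later iteration must contain at least one of these $\le 5$ newly created paths, and the remaining $\le 5$ members are chosen from the $O(n)$ paths of $\mcQ$; hence each iteration spawns $O(n^5)$ new candidate collections. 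Since $|\mcQ|$ strictly decreases each iteration and is initially at most $n$, there are at most $n$ iterations, giving $O(n) \cdot O(n^5) = O(n^6)$ new candidates in total. Adding the $O(n^6)$ original candidates, the total number of candidate collections examined over the whole run of Step~2 is $O(n^6)$.

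Finally I would assemble the bound: Step~2 examines $O(n^6)$ candidate collections, spending $O(1)$ on each to test the configuration and produce the replacement, so Step~2 runs in $O(n^6)$ time; together with the $O(nm) = O(n^3)$ cost of Step~1 and the $O(n)$ cost of returning $\mcQ$ in Step~3, the algorithm \textsc{Approx} runs in $O(n^6)$ time, as claimed.

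The main obstacle is making the iteration-count argument airtight — in particular, justifying that one need not re-examine every candidate from scratch in every iteration, but only the newly generated ones together with the residual $O(n)$ paths, and confirming that an appropriate bookkeeping of ``fresh'' paths indeed keeps the per-iteration new-candidate count at $O(n^5)$ rather than $O(n^6)$. One should also note that the ``at most $n$ iterations'' claim uses the monotone decrease of $|\mcQ|$ and the fact that the operations never touch the $1$-paths (so the least-$1$-paths invariant is preserved and each operation is legitimately size-reducing); I would state this explicitly to close the argument.
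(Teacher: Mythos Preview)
Your proposal is correct and follows essentially the same argument as the paper's own proof: bound the $O(n^6)$ original candidate collections, observe that each of the at most $n$ iterations creates at most five new paths and hence $O(n^5)$ new candidate collections, and note that each candidate is checked in $O(1)$ time. The paper's proof is in fact slightly less detailed than yours; your explicit justification that $j_1+j_2=i_1+i_2-1$ and your remarks on bookkeeping only strengthen the exposition.
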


\section{Analysis of the approximation ratio $4/3$} \label{sec4}
In this section, we show that our local search algorithm {\sc Approx} is a $4/3$-approximation for {\sc $3$PP}.
The performance analysis is done through amortization.

The $3$-path partition produced by the algorithm {\sc Approx} is denoted as $\mcQ$;
let $\mcQ_i$ denote the sub-collection of $i$-paths in $\mcQ$, for $i = 1, 2, 3$, respectively.
Let $\mcQ^*$ be an optimal $3$-path partition, {\it i.e.}, it achieves the minimum total number of paths,
and let $\mcQ_i^*$ denote the sub-collection of $i$-paths in $\mcQ^*$, for $i = 1, 2, 3$, respectively.
Since our $\mcQ$ contains the least $1$-paths among all $3$-path partitions for $G$, we have
\begin{equation}
\label{eq3}
|\mcQ_1| \le |\mcQ_1^*|.
\end{equation}
Since both $\mcQ$ and $\mcQ^*$ cover all the vertices of $V$, we have 
\begin{equation}
\label{eq4}
|\mcQ_1| + 2 |\mcQ_2| + 3 |\mcQ_3| = n = |\mcQ_1^*| + 2 |\mcQ^*_2| + 3 |\mcQ^*_3|.
\end{equation}

Next, we prove the following inequality which gives an upper bound on $|\mcQ_2|$, through an amortized analysis:
\begin{equation}
\label{eq5}
|\mcQ_2| \le |\mcQ^*_1| + 2 |\mcQ^*_2| + |\mcQ^*_3|.
\end{equation}
Combining Eqs.~(\ref{eq3}, \ref{eq4}, \ref{eq5}), it follows that
\begin{equation}
\label{eq6}
3 |\mcQ_1| + 3 |\mcQ_2| + 3 |\mcQ_3| \le 4 |\mcQ_1^*| + 4 |\mcQ^*_2| + 4 |\mcQ^*_3|,
\end{equation}
that is, $|\mcQ| \le \frac 43 |\mcQ^*|$, and consequently the following theorem holds.

\begin{theorem}
\label{thm3.1}
The algorithm {\sc Approx} is an $O(n^6)$-time $4/3$-approximation for the {\sc $3$PP} problem,
and the performance ratio $4/3$ is tight for {\sc Approx}.
\end{theorem}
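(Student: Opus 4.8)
The plan is to treat the three assertions of Theorem~\ref{thm3.1} separately. The running-time claim is nothing new: it is exactly Theorem~\ref{thm2.5}. The approximation ratio has already been reduced, in the displayed derivation preceding the statement, to the single amortized inequality~\eqref{eq5}, that is, $|\mcQ_2| \le |\mcQ^*_1| + 2|\mcQ^*_2| + |\mcQ^*_3|$: granting this, Eqs.~\eqref{eq3}, \eqref{eq4} and \eqref{eq5} combine linearly into~\eqref{eq6}, i.e.\ $|\mcQ| \le \frac 43 |\mcQ^*|$. So the real work is (a) proving~\eqref{eq5}, and (b) exhibiting an instance family that forces the ratio up to $4/3$.

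For (a), I would set up a charging (amortized) scheme between the solution $\mcQ$ returned by {\sc Approx} and a fixed optimal partition $\mcQ^*$. Each $2$-path of $\mcQ$ is given one unit of credit, which it redistributes among the paths of $\mcQ^*$ that contain its two endpoints, splitting the unit between the at most two such optimal paths according to a short list of local rules. The goal is to show that, after redistribution, every $1$-path of $\mcQ^*$ holds at most $1$ unit, every $2$-path at most $2$ units, and every $3$-path at most $1$ unit; summing over $\mcQ^*$ then yields precisely the right-hand side of~\eqref{eq5}. The engine of the argument is that {\sc Approx} has terminated, so \emph{none} of the four local operations applies to $\mcQ$; each forbidden operation translates into a forbidden local pattern of $2$-paths (and assisting $3$-paths) of $\mcQ$ sitting on a path of $\mcQ^*$. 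Concretely, a $3$-path $P^*$ of $\mcQ^*$ that would be ``too heavily used'' --- say, met by three $2$-paths of $\mcQ$ whose remaining incidences would let them be relinked --- is exactly one of the expected collections of {\sc Operation $3$-$0$-By-$0$-$2$} or {\sc Operation $3$-$1$-By-$0$-$3$}, hence cannot occur; the heavier patterns are similarly excluded by {\sc Operation $4$-$1$-By-$1$-$3$} and {\sc Operation $4$-$2$-By-$1$-$4$}. I expect this enumeration to be the main obstacle: one must classify, for a fixed path of $\mcQ^*$, all the ways the endpoints of $2$-paths of $\mcQ$ can land on it, invoke the priority ordering so that no configuration escapes the non-applicability of \emph{some} operation, and verify the per-path charge bound in every surviving case. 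Designing the redistribution rules so that they mesh with the operation priorities is the delicate point.

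For (b), I would build a gadget on nine vertices whose optimal $3$-path partition consists of three $3$-paths (and no $1$- or $2$-paths), while it also admits a locally optimal partition of four paths --- three $2$-paths and one $3$-path, again with no $1$-paths, so that the least-$1$-path requirement of Step~1 does not exclude it --- such that none of {\sc Operation $3$-$0$-By-$0$-$2$}, {\sc Operation $3$-$1$-By-$0$-$3$}, {\sc Operation $4$-$1$-By-$1$-$3$}, {\sc Operation $4$-$2$-By-$1$-$4$} can be applied; one then takes $k$ vertex-disjoint copies. Verifying non-applicability reduces to a finite check: the three $2$-paths cannot be connected into a $6$-path; the lone $3$-path admits no rotation, splitting, or breaking into singletons that produces three new $3$-paths together with the $2$-paths; and the $4$-$1$ and $4$-$2$ operations cannot fire because a single gadget contains only three $2$-paths, and adjacencies do not cross copies. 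On such an instance {\sc Approx} may output a solution of size $4k$ against the optimum $3k$, giving ratio exactly $4/3$ and establishing tightness.
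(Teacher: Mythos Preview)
Your plan for part (a) has a genuine gap. You assert that a $3$-path $P^*\in\mcQ^*_3$ meeting three $2$-paths of $\mcQ_2$ ``is exactly one of the expected collections of {\sc Operation $3$-$0$-By-$0$-$2$} or {\sc Operation $3$-$1$-By-$0$-$3$}, hence cannot occur.'' This is false. Write $P^*=v''\mbox{-}v'\mbox{-}v$ and the three $2$-paths as $u\mbox{-}v$, $u'\mbox{-}v'$, $u''\mbox{-}v''$. The edges $(v,v'),(v',v'')\in E$ supplied by $P^*$ do not let the three $2$-paths be concatenated into a $6$-path (that would additionally require an edge out of $u'$), so {\sc Operation $3$-$0$-By-$0$-$2$} need not apply; and {\sc Operation $3$-$1$-By-$0$-$3$} needs an assisting $3$-path \emph{of $\mcQ$}, not of $\mcQ^*$, which nothing here supplies. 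Thus a purely local rule --- each $2$-path charges only the optimal paths through its two endpoints --- can leave a $3$-path of $\mcQ^*_3$ holding $3/2$ units, and \eqref{eq5} does not follow. The paper's Phase~1 is precisely your scheme and indeed only achieves $\tau_1(P)\le 3/2$ for $P\in\mcQ^*_3$. The missing idea is a second, non-local phase: when $\tau_1(P_1)=3/2$, one proves (Claim~\ref{clm3.3}, using {\sc Operation $3$-$0$-By-$0$-$2$} and the least-$1$-paths property) that the neighbour $w$ of $u$ on the optimal $3$-path $P_2\ni u$ must lie on some $P_3\in\mcQ_3$, and then routes the surplus $1/2$ along a ``pipe'' $u\to w\to t$ through $P_3$ to the optimal path containing another vertex $t$ of $P_3$. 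Bounding what these destination paths can accumulate is exactly where {\sc Operation $4$-$1$-By-$1$-$3$} and {\sc Operation $4$-$2$-By-$1$-$4$} enter (Claims~\ref{clm3.5} and~\ref{clm3.6}); they are not used, as you suggest, to forbid ``heavier patterns'' on a single optimal path.

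For part (b) you have not exhibited a gadget, and the natural $9$-vertex candidates fail. For instance, optimal $3$-paths $a\mbox{-}b\mbox{-}c$, $d\mbox{-}e\mbox{-}f$, $g\mbox{-}h\mbox{-}i$ versus the local optimum $a\mbox{-}b$, $d\mbox{-}e$, $g\mbox{-}h$, $c\mbox{-}f\mbox{-}i$ is killed by {\sc Operation $3$-$1$-By-$0$-$3$} (priority $2.1$), since $c,f,i$ are adjacent to $b,e,h$. More generally, with only one $3$-path in $\mcQ$ it is hard to keep each of its three vertices away from three distinct $2$-paths; the paper's tight instance uses $27$ vertices and three $3$-paths of $\mcQ_3$ precisely to absorb these adjacencies.
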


In the amortized analysis, each $2$-path of $\mcQ_2$ has one token ({\it i.e.}, $|\mcQ_2|$ tokens in total) to be distributed to the paths of $\mcQ^*$.
The upper bound in Eq.~(\ref{eq5}) will immediately follow if we prove the following lemma.

\begin{lemma}
\label{lm3.2}
There is a distribution scheme in which 
\begin{enumerate}
\parskip=2pt
	\item every $1$-path of $\mcQ^*_1$ receives at most $1$ token;
	\item every $2$-path of $\mcQ^*_2$ receives at most $2$ tokens;
	\item every $3$-path of $\mcQ^*_3$ receives at most $1$ token.
\end{enumerate}
\end{lemma}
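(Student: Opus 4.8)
The plan is to set up the token distribution locally, path-by-path over $\mcQ^*$, and then argue that the termination condition of {\sc Approx} forbids any configuration that would force an overflow. First I would fix the output partition $\mcQ$ together with an optimal partition $\mcQ^*$, and think of the vertices of each $2$-path $Q \in \mcQ_2$ as needing to dispatch their single token to whichever path(s) of $\mcQ^*$ contain the two endpoints of $Q$. The natural rule is: if both endpoints of $Q$ lie on the same path $Q^* \in \mcQ^*$, send the whole token there; otherwise split it, sending $1/2$ to the $\mcQ^*$-path containing each endpoint (and one must check half-tokens still respect the integral bounds, which they do since the bounds are $1,2,1$ — so really I would aim for a rule that keeps things integral, e.g.\ charge each endpoint-vertex $1/2$ and then observe each $Q^*$ collects a number of halves bounded by its length). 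The quantity a given $Q^* \in \mcQ^*$ receives is then at most (number of endpoints of $\mcQ_2$-paths lying on $Q^*$)$\,/2$, so a $1$-path can get at most $1/2 < 1$ automatically, a $2$-path at most $2/2 = 1 \le 2$ automatically, and the only binding case is a $3$-path $Q^* = a$-$b$-$c$, which could a priori collect up to $3/2 > 1$ if all three of $a,b,c$ are endpoints of $\mcQ_2$-paths and at least two of the incident tokens are not already fully absorbed elsewhere.

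So the heart of the argument is: a $3$-path $Q^* = a$-$b$-$c$ of $\mcQ^*_3$ cannot simultaneously have all three vertices be endpoints of (not-necessarily-distinct) $2$-paths of $\mcQ_2$ in a way that overcharges it. I would break this into cases by how many distinct $\mcQ_2$-paths meet $\{a,b,c\}$ and whether the $\mcQ$-neighbors of $a,b,c$ are themselves $2$-path vertices, $3$-path vertices, or the other endpoints of the same $\mcQ_2$-paths. The key structural inputs are the edges $(a,b),(b,c)\in E$ (they are consecutive on $Q^*$), which are edges of $G$ available to the local operations even though they are not in $\mcQ$. For instance, if $a,b,c$ are the three $v_i$-endpoints of three distinct $2$-paths $u_1$-$v_1$, $u_2$-$v_2$, $u_3$-$v_3 \in \mcQ_2$, then $u_1$-$v_1$-$v_2$-$u_2$ is not yet a $6$-path but adding the further edge $(b,c)$ shows $u_1$-$a$-$b$-$c$-$u_3 = u_1$-$v_1$-$v_2$-$v_3$-$u_3$... wait, that is only a $5$-path; the precise point is that the three $2$-paths plus the two edges $(a,b),(b,c)$ form a structure on which {\sc Operation $3$-$0$-By-$0$-$2$} (if the three $2$-paths chain into a $6$-path) or one of {\sc Operation $3$-$1$-By-$0$-$3$}, {\sc $4$-$1$-By-$1$-$3$}, {\sc $4$-$2$-By-$1$-$4$} (when a $3$-path or extra $2$-paths are involved) would apply, contradicting termination. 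I would enumerate the possible local pictures around $Q^*$ and match each to exactly one of the four operations' configurations from Section~\ref{sec3.2}, using the priority ordering to discharge the easy cases first.

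Concretely, the step order I would follow is: (1) state the half-token rule and verify the $1$-path and $2$-path bounds are immediate; (2) reduce the $3$-path bound to the claim that no $Q^* = a$-$b$-$c \in \mcQ^*_3$ receives more than $1$; (3) assume for contradiction such a $Q^*$ receives more than $1$, i.e.\ at least three halves reach it, forcing a minimal set of endpoint incidences; (4) case-split on the $\mcQ$-structure of the (at most three) $2$-paths incident to $\{a,b,c\}$ and the types of their other endpoints' $\mcQ$-components — including the subtlety that two of $a,b,c$ might be the two endpoints of the same $2$-path, in which case its token was already fully absorbed and does not double-count; (5) in each surviving case exhibit an applicable local operation (citing which of priorities $1$ through $4$), contradicting the fact that {\sc Approx} terminated. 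I expect step (4)/(5) to be the main obstacle: the bookkeeping of which $\mcQ$-paths are available and the matching of every residual case to one of the five operation configurations (especially distinguishing the three classes of {\sc Operation $4$-$2$-By-$1$-$4$} and handling the rotations of $3$-paths permitted by Figure~\ref{fig22}) is exactly where the argument becomes delicate, and where the specific shapes of the five configurations — rather than any generic counting — do the real work.
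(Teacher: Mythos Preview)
Your half-token rule is essentially the paper's Phase~1, and your bounds for $1$- and $2$-paths are indeed immediate. The gap is in step~(3)--(5): you assume that a $3$-path $Q^* = a$-$b$-$c \in \mcQ^*_3$ receiving $3/2$ must trigger one of the local operations, and plan to derive a contradiction. It does not. Take three distinct $2$-paths $u_1$-$a$, $u_2$-$b$, $u_3$-$c \in \mcQ_2$ with $u_1,u_2,u_3 \notin Q^*$. The only extra edges you are guaranteed are $(a,b)$ and $(b,c)$, both incident to $b$; the best path you can build on the six vertices is $u_1$-$a$-$b$-$c$-$u_3$, a $5$-path with $u_2$ dangling, so {\sc Operation $3$-$0$-By-$0$-$2$} does not apply. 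The remaining operations all require a $3$-path of $\mcQ$ in the picture, but every vertex you see so far lies on a $2$-path of $\mcQ$, and nothing forces $u_1,u_2,u_3$ to be adjacent (in $G$) to any $3$-path of $\mcQ_3$. So this configuration is consistent with termination of {\sc Approx}; the paper in fact treats it as the generic case (Lemma~\ref{lm3.4}, Figure~\ref{fig32}) rather than an impossibility. The ``two of $a,b,c$ on the same $2$-path'' sub-case is not safe either: if $a$-$b \in \mcQ_2$ and $c$-$u_3 \in \mcQ_2$ with $u_3 \notin Q^*$, then $Q^*$ still collects $1 + 1/2 = 3/2$.

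What the paper actually does is accept that $\tau_1(Q^*)=3/2$ can occur and then run a second phase that \emph{moves} the surplus $1/2$ off $Q^*$ along a ``pipe'' $u \to w \to t$, where $(u,w)$ is an edge of the neighbouring path $P_2\in\mcQ^*_3$ and $w,t$ lie on a $3$-path $P_3\in\mcQ_3$ (whose existence is forced by Claim~\ref{clm3.3}). The four local operations are not used to forbid the overloaded configuration at $Q^*$; they are used downstream, to cap how much token the destination paths $P_4,P_5\in\mcQ^*$ can absorb through pipes (Claims~\ref{clm3.5} and~\ref{clm3.6}, and the case analysis around Figures~\ref{fig33}--\ref{fig36}). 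In particular {\sc Operation $4$-$2$-By-$1$-$4$} is what prevents two pipes from colliding at the same tail vertex. Your plan has the role of the operations inverted: you need a redistribution step, not a direct contradiction, and the delicate case analysis you anticipate in step~(5) has to be carried out at the \emph{receiving} end of the transfer, not at $Q^*$ itself.
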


In the rest of the section we present the distribution scheme that satisfies the three requirements stated in Lemma~\ref{lm3.2}.

Denote $E(\mcQ_2)$, $E(\mcQ_3)$, $E(\mcQ^*_2)$, $E(\mcQ^*_3)$ as the set of all the edges on the paths of $\mcQ_2$, $\mcQ_3$, $\mcQ^*_2$, $\mcQ^*_3$, respectively,
and $E(\mcQ^*) = E(\mcQ^*_2) \cup E(\mcQ^*_3)$.
In the subgraph of $G\big(V, E(\mcQ_2) \cup E(\mcQ^*)\big)$,
only the midpoint of a $3$-path of $\mcQ^*_3$ may have degree $3$, {\it i.e.}, incident with two edges of $E(\mcQ^*)$ and an edge of $E(\mcQ_2)$,
while all the other vertices have degree at most $2$ since each is incident with at most one edge of $E(\mcQ_2)$ and at most one edge of $E(\mcQ^*)$.

Our distribution scheme consists of two phases.
We define two functions $\tau_1(P)$ and $\tau_2(P)$ to denote the fractional amount of token received by a path $P \in \mcQ^*$ in Phase 1 and Phase 2, respectively;
we also define the function $\tau(P) = \tau_1(P) + \tau_2(P)$ to denote the total amount of token received by the path $P \in \mcQ^*$
at the end of our distribution process.
Then, we have $\sum_{P \in \mcQ^*} \tau(P) = |\mcQ_2|$.

\subsection{Distribution process Phase 1} \label{sec4.1}
In Phase 1, we distribute all the $|\mcQ_2|$ tokens to the paths of $\mcQ^*$ ({\it i.e.}, $\sum_{P \in \mcQ^*} \tau_1(P) = |\mcQ_2|$) such that 
a path $P \in \mcQ^*$ receives some token from a $2$-path $u$-$v \in \mcQ_2$ only if $u$ or $v$ is (or both are) on $P$, and
the following three requirements are satisfied:
\begin{enumerate}
\parskip=2pt
	\item $\tau_1(P_i) \le 1$ for $\forall P_i \in \mcQ^*_1$;
	\item $\tau_1(P_j) \le 2$ for $\forall P_j \in \mcQ^*_2$;
	\item $\tau_1(P_\ell) \le 3/2$ for $\forall P_\ell \in \mcQ_3^*$.
\end{enumerate}
In this phase, the one token held by each $2$-path of $\mcQ_2$ is breakable but can only be broken into two halves.
So for every path $P \in \mcQ^*$, $\tau_1(P)$ is a multiple of $1/2$.

For each $2$-path $u$-$v \in \mcQ_2$, at most one of $u$ and $v$ can be a singleton of $\mcQ^*$.
If $P_1 = v \in \mcQ^*_1$, then the whole $1$ token of the path $u$-$v$ is distributed to $v$,
that is, $\tau_1(v) = 1$ (see Figure~\ref{fig31a} for an illustration).
This way, we have $\tau_1(P) \le 1$ for $\forall P \in \mcQ^*_1$.

\begin{figure}[ht]
\centering
\captionsetup[subfigure]{justification=centering}
\begin{subfigure}{0.25\textwidth}
\centering
\includegraphics[width=.38\linewidth]{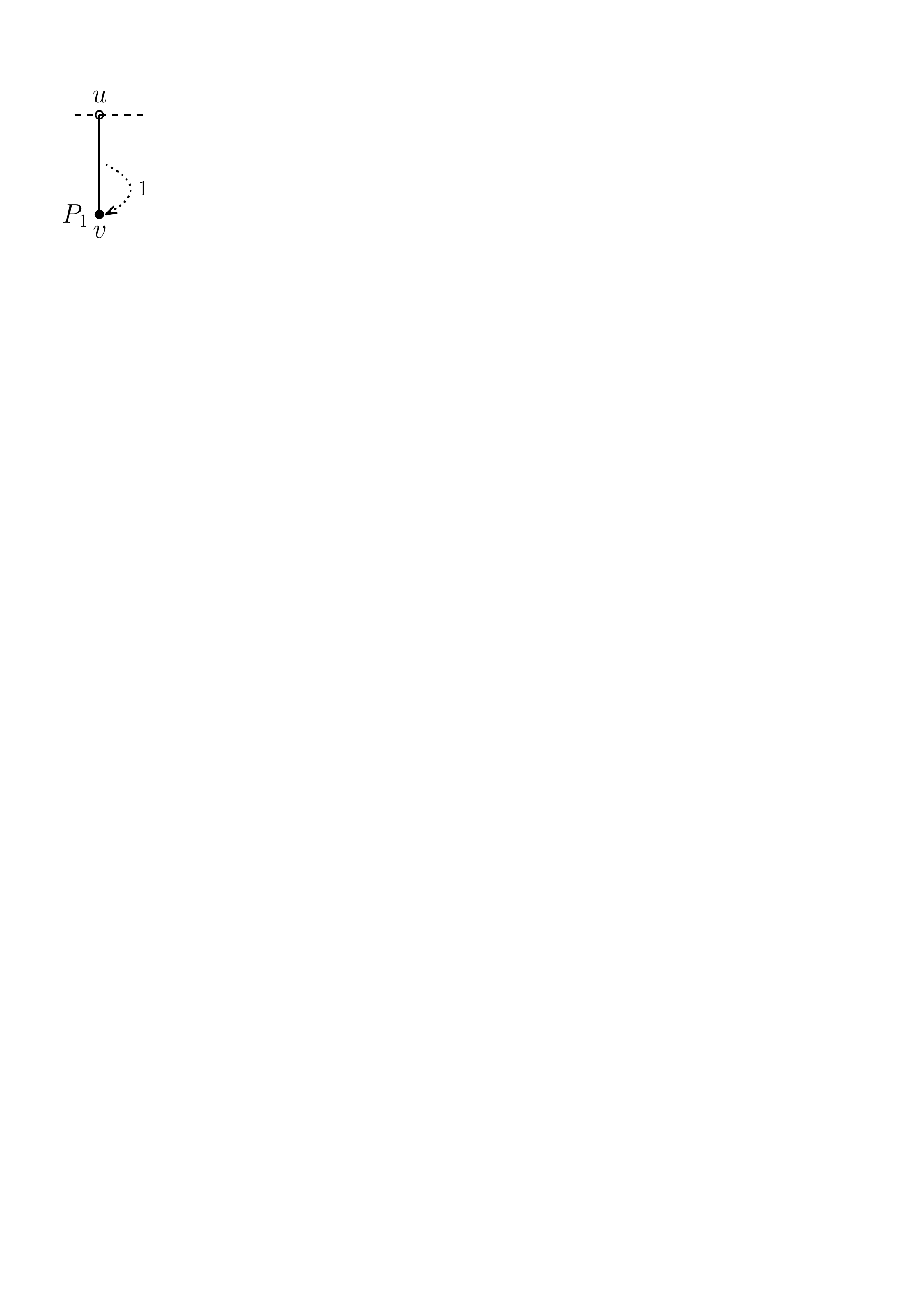}
\caption{\label{fig31a}}
\end{subfigure}
\hskip .15cm
\begin{subfigure}{0.28\textwidth}
\centering
\includegraphics[width=.4\linewidth]{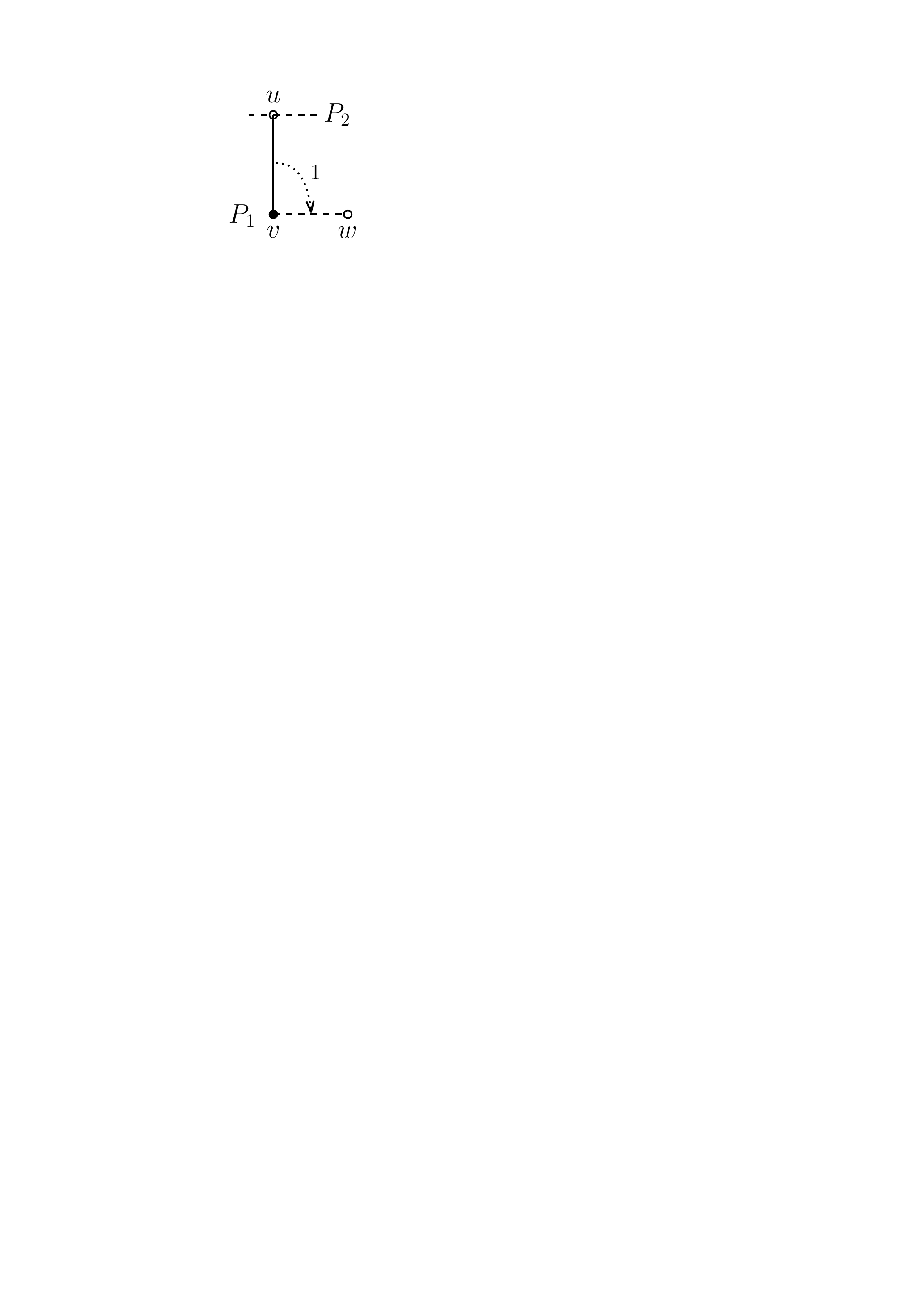}
\caption{\label{fig31b}}
\end{subfigure}
\hskip .15cm
\begin{subfigure}{0.38\textwidth}
\centering
\includegraphics[width=.45\linewidth]{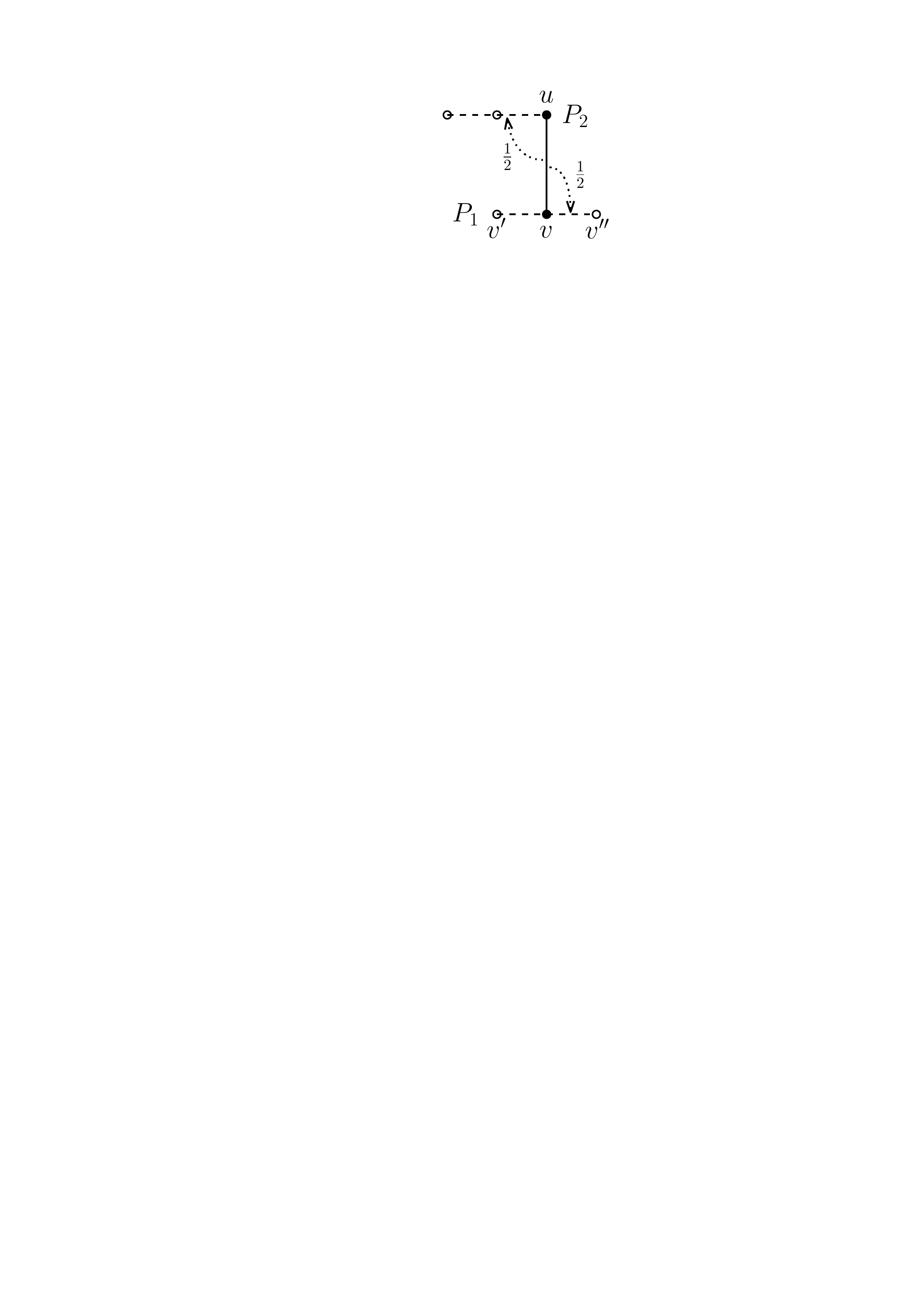}
\caption{\label{fig31c}}
\end{subfigure}

\caption{Illustrations of the token distribution scheme in Phase 1,
where solid edges are in $E(\mcQ_2)$ and dashed edges are in $E(\mcQ^*)$.
In Figure~\ref{fig31c}, $u$ or $v$ can be either an endpoint or the midpoint of the corresponding $3$-path of ${\cal Q}^*_3$.\label{fig31}}
\end{figure}

For a $2$-path $u$-$v \in \mcQ_2$, we consider the cases when both $u$ and $v$ are incident with an edge of $E(\mcQ^*)$.
If one of $u$ and $v$, say $v$, is incident with an edge of $E(\mcQ^*_2)$, that is, $v$ is on a $2$-path $P_1 = v$-$w \in \mcQ^*_2$,
then the $1$ token of the path $u$-$v$ is given to the path $P_1 \in \mcQ^*_2$
(see Figure~\ref{fig31b} for an illustration).
Note that if $u$ is also on a $2$-path $P_2 \in \mcQ^*_2$ and $P_2 \ne P_1$, then the path $P_2$ receives no token from the path $u$-$v$.
The choice of which of the two vertices $u$ and $v$ comes first does not matter.
This way, we have $\tau_1(P) \le 2$ for $\forall P \in \mcQ^*_2$
since the $2$-path $P_1 \in \mcQ^*_2$ might receive another token from a $2$-path of $\mcQ_2$ incident at $w$.

Next, we consider the cases for a $2$-path $u$-$v \in \mcQ_2$ in which each of $u$ and $v$ is incident with an edge of $E(\mcQ^*_3)$.
Consider a $3$-path $P_1 \in \mcQ^*_3$: $v'$-$v$-$v''$.
We distinguish two cases for a vertex of $P_1$ to determine the amount of token received by $P_1$ (see Figure~\ref{fig31c} for an illustration).
In the first case, either the vertex, say $v'$, is not on any path of $\mcQ_2$ or it is on a path of $\mcQ_2$ with $0$ token left,
then $P_1$ receives no token {\em through vertex $v'$}.
In the second case, the vertex, say $v$ (the following argument also applies to the other two vertices $v'$ and $v''$),
is on a path $u$-$v \in \mcQ_2$ holding $1$ token, 
and consequently $u$ must be on a $3$-path $P_2 \in \mcQ^*_3$,
then the $1$ token of $u$-$v$ is broken into two halves, with $1/2$ token distributed to $P_1$ through vertex $v$ and the other $1/2$ token distributed to $P_2$ through vertex $u$.
This way, we have $\tau_1(P) \le 3/2$ for $\forall P \in \mcQ_3^*$
since the $3$-path $P_1 \in \mcQ^*_3$ might receive another $1/2$ token through each of $v'$ and $v''$.

\subsection{Distribution process Phase 2} \label{sec4.2}
In Phase 2, we will transfer the extra $1/2$ token from every $3$-path $P \in \mcQ_3^*$ with $\tau_1(P) = 3/2$ to some other paths of $\mcQ^*$
in order to satisfy the three requirements of Lemma~\ref{lm3.2}.
In this phase, each $1/2$ token can be broken into two quarters,
thus for a path $P \in \mcQ^*$, $\tau_2(P)$ is a multiple of $1/4$.

Consider a $3$-path $P_1 = v''$-$v'$-$v \in \mcQ_3^*$.
We observe that if $\tau_1(P_1) = 3/2$,
then each of $v$, $v'$, and $v''$ must be incident with an edge of $E(\mcQ_2)$, such that the other endpoint of the edge is also on a $3$-path of $\mcQ^*_3$
(see the last case in Phase 1).
Since there are three of them, one of $v$, $v'$ and $v''$, say $v$, on an edge $(u, v) \in E(\mcQ_2)$,
has the associated vertex $u$ on a distinct $3$-path $P_2 \in \mcQ^*_3$ (that is, $P_2 \ne P_1$).
Let $w$ be a vertex adjacent to $u$ on $P_2$, {\it i.e.}, $(u, w)$ is an edge on $P_2$.
(See Figure~\ref{fig32} for an illustration.)
We can verify the following claim.

\begin{claim}
\label{clm3.3}
The vertex $w$ is on a $3$-path of $\mcQ_3$, being either an endpoint or the midpoint.
\end{claim}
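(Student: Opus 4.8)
The plan is to argue by contradiction, assuming that $w$ is \emph{not} on a $3$-path of $\mcQ_3$, and to derive that one of the higher-priority local operations would have applied to the final partition $\mcQ$, contradicting the termination condition of {\sc Approx}. First I would enumerate where $w$ can live in $\mcQ$: it is either on a $1$-path, on a $2$-path, or on a $3$-path of $\mcQ$. The $1$-path case is immediately excluded, since $w$ lies on the $3$-path $P_2 \in \mcQ^*_3$ and hence is covered by an edge of $E(\mcQ^*)$; more to the point, the two $2$-paths $u$-$v$ and the edge $(u,w)$ together with the structure around $P_1$ would let us grow a path touching $w$, so a singleton $w$ in $\mcQ$ would contradict the minimality of $|\mcQ_1|$ (Eq.~(\ref{eq3}), via the least-$1$-paths property preserved throughout the local search). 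So the real work is to rule out $w$ being on a $2$-path of $\mcQ_2$.

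The key step is to assemble, from the hypothesis $\tau_1(P_1)=3/2$, a concrete bundle of $2$-paths of $\mcQ_2$ and of $3$-paths of $\mcQ_3$ near the vertices $v, v', v''$ of $P_1$ and the vertex $u$ of $P_2$. From the last case of Phase 1, each of $v, v', v''$ sits on a distinct $2$-path of $\mcQ_2$ whose other endpoint lies on some $3$-path of $\mcQ^*_3$; call these $2$-paths $e_v=(u,v)$, $e_{v'}$, $e_{v''}$. Now suppose $w$ is on a $2$-path $f \in \mcQ_2$. I would then look at the $2$-paths $e_v$ and $f$, joined by the edge $(u,w)$ of $E(\mcQ^*)$, and at the additional $2$-paths $e_{v'}, e_{v''}$ hanging off $v', v''$, and the edge $(v,v')$ or $(v',v'')$ of $P_1$. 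The goal is to exhibit three mutually ``chainable'' $2$-paths of $\mcQ_2$ forming a $6$-path in $G$ — which would trigger {\sc Operation $3$-$0$-By-$0$-$2$} — or, if such a $6$-path cannot be formed because of how the endpoints line up, to exhibit instead three $2$-paths plus a $3$-path of $\mcQ$ in one of the configurations of Figures~\ref{fig23}--\ref{fig24}, triggering {\sc Operation $3$-$1$-By-$0$-$3$}. In either case the local search would not have terminated, a contradiction.

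I expect the main obstacle to be the case analysis inside ``$w$ is on a $2$-path of $\mcQ_2$'': one must check that the $2$-paths $e_v$, $f$, $e_{v'}$, $e_{v''}$ are genuinely distinct (e.g.\ that $f \ne e_v$, which follows since $w \ne u,v$ as $w$ is on $P_2 \ne P_1$ and $u$ is an endpoint of $e_v$), and then verify that the edges $(u,w)$, plus the $P_1$-edges incident to $v$, and the edges from $v', v''$ into their $\mcQ^*_3$-partners, can always be arranged into the adjacency pattern required by one of the two excluded operations. The delicate point is which endpoint of each $2$-path is ``free'': since $\mcQ_2$ consists of $2$-paths, both endpoints of each $e_\bullet$ and of $f$ are available as attachment points, so the adjacencies $(u,v)$, $(u,w)$ give a $4$-vertex path $v$-$u$-$w$-(other end of $f$), and one then needs only one more edge from $\{v',v''\}$ or from a $3$-path of $\mcQ$ to reach a forbidden configuration; I would organize this by whether that extra edge is present, handling the ``present'' subcase via {\sc Operation $3$-$0$-By-$0$-$2$} and the ``absent'' subcase via {\sc Operation $3$-$1$-By-$0$-$3$} using the $3$-path $P_2$-structure (note $u\in P_2$, and $P_2$ is a $3$-path of $\mcQ^*_3$, but for the operation we need a $3$-path of $\mcQ$, so here the rotation remark around Figure~\ref{fig22} and the fact that $w$'s neighbour on $P_2$ in $\mcQ$ must be accounted for both come into play). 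Once every subcase yields an applicable higher-priority operation, the contradiction is complete and the claim follows.
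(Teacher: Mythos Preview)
Your overall strategy matches the paper's: eliminate the $1$-path case via the least-$1$-paths property (merge the singleton $w$ into $u$-$v$ along $(u,w)$), and eliminate the $2$-path case by exhibiting a $6$-path that triggers {\sc Operation $3$-$0$-By-$0$-$2$}. But you are overcomplicating the second case. The ``extra edge'' you propose to branch on is \emph{always} present: $v$ lies on $P_1$, so $(v,v')$ is an edge of $P_1 \subseteq E$. Hence if $w$ lies on a $2$-path $w$-$x \in \mcQ_2$, the $6$-path $u'$-$v'$-$v$-$u$-$w$-$x$ (using $(v',v)\in E(P_1)$ and $(u,w)\in E(P_2)$) exists immediately, and {\sc Operation $3$-$0$-By-$0$-$2$} applies to the three $2$-paths $u'$-$v'$, $u$-$v$, $w$-$x$. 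There is no ``absent'' subcase and {\sc Operation $3$-$1$-By-$0$-$3$} is never invoked --- which is just as well, since your proposed fallback via ``the $3$-path $P_2$-structure'' cannot work: $P_2$ is a $3$-path of $\mcQ^*$, not of $\mcQ$, and that operation requires a $3$-path of $\mcQ$; the rotation remark does not bridge this gap.

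The one genuine wrinkle, which you gesture at but do not resolve, is distinctness of the three $2$-paths: $w$ might coincide with $u'$ or $u''$ (it cannot equal $u$, $v$, $v'$, or $v''$ since those lie on $P_1$ or on $u$-$v$, while $w\in P_2\ne P_1$). The paper dispatches this first: if, say, $w=u'$, then $(u,u')\in E$ and, together with $(v',v'')\in E(P_1)$, the three original $2$-paths $u$-$v$, $u'$-$v'$, $u''$-$v''$ already form the $6$-path $v$-$u$-$u'$-$v'$-$v''$-$u''$, again triggering {\sc Operation $3$-$0$-By-$0$-$2$}. With that collision check done, the proof is three lines.
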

\begin{proof}
See Figure~\ref{fig32} for an illustration.
Firstly, $w$ cannot collide into any of $u', u''$
since otherwise the three $2$-paths $u$-$v$, $u'$-$v'$, $u''$-$v''$ could be replaced due to {\sc Operation $3$-$0$-By-$0$-$2$}.
Then, suppose $w$ is on a $2$-path $w$-$x$ of $\mcQ_2$,
then the three $2$-paths $u$-$v$, $u'$-$v'$, $w$-$x$ could be replaced due to {\sc Operation $3$-$0$-By-$0$-$2$}.
Lastly, suppose $w$ is a singleton of $\mcQ_1$,
then $w$ and the $2$-path $u$-$v$ could be merged to a $3$-path so that $\mcQ$ is not a partition with the least $1$-paths, a contradiction.
Thus, $w$ cannot be a singleton of $\mcQ_1$ or on any $2$-path of $\mcQ_2$,
and the claim is proved.
\end{proof}

We conclude from Claim~\ref{clm3.3} that
$\tau_1(P_2) \le 1$,
and we have the following lemma.

\begin{lemma}
\label{lm3.4}
For any $3$-path $P_1 \in \mcQ_3^*$ with $\tau_1(P_1) = 3/2$,
there must be another $3$-path $P_2 \in \mcQ_3^*$ with $\tau_1(P_2) \le 1$ such that
\begin{enumerate}
\parskip=2pt
	\item $u$-$v$ is a $2$-path of $\mcQ_2$, where $v$ is on $P_1$ and $u$ is on $P_2$, and
	\item any vertex adjacent to $u$ on $P_2$ is on a $3$-path $P_3$ of $\mcQ_3$.
\end{enumerate}
\end{lemma}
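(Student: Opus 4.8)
The plan is to package what was just established into the statement of Lemma~\ref{lm3.4}, since most of the work has already been done. Lemma~\ref{lm3.4} is essentially a restatement of the setup preceding Claim~\ref{clm3.3} together with Claim~\ref{clm3.3} itself, plus the observation that $\tau_1(P_2)\le 1$. So the proof should mostly consist of assembling these pieces and checking that nothing extra is needed.

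First I would fix a $3$-path $P_1 = v''$-$v'$-$v \in \mcQ_3^*$ with $\tau_1(P_1) = 3/2$ and recall, from the last case of Phase~1, that $\tau_1(P_1) = 3/2$ can only happen if each of $v$, $v'$, $v''$ receives $1/2$ token through it; by the description of that case this forces each of $v$, $v'$, $v''$ to be incident with an edge of $E(\mcQ_2)$ whose other endpoint lies on a $3$-path of $\mcQ^*_3$. Next, I would argue that not all three of these ``partner'' $3$-paths of $\mcQ^*_3$ can coincide with $P_1$: if, say, the edge at $v$ went to $v'$ or $v''$, that edge would be an edge inside $P_1$, but the edges of $E(\mcQ_2)$ are vertex-disjoint from... — more carefully, the relevant point is just that the three edges of $E(\mcQ_2)$ at $v,v',v''$ are distinct (they are a matching), and if the partners of all three were among $\{v,v',v''\}$ themselves, we would need these three edges to form a configuration inside the $6$ endpoints, but there are only $3$ such vertices available as partners and each edge needs a fresh partner, forcing at least one partner to lie outside $P_1$. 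This gives a vertex, call it $v$ (relabeling), with $(u,v)\in E(\mcQ_2)$ and $u$ on a distinct $3$-path $P_2\in\mcQ^*_3$, which is exactly item~1.

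Then I would invoke Claim~\ref{clm3.3}: taking $w$ adjacent to $u$ on $P_2$ (either neighbour works, so pick one), $w$ lies on a $3$-path of $\mcQ_3$, which is item~2 with $P_3$ that $3$-path. Finally, for the bound $\tau_1(P_2)\le 1$: in Phase~1 a $3$-path of $\mcQ^*_3$ receives $1/2$ token through each of its three vertices that sits on a $2$-path of $\mcQ_2$ still holding a token. The vertex $w$ of $P_2$ is on a $3$-path of $\mcQ_3$ by Claim~\ref{clm3.3}, hence not on any $2$-path of $\mcQ_2$, so $P_2$ gets nothing through $w$; thus at most its other two vertices contribute, giving $\tau_1(P_2)\le 2\cdot\tfrac12 = 1$. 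This is the ``We conclude from Claim~\ref{clm3.3} that $\tau_1(P_2)\le 1$'' remark made just above the lemma, so I would spell out that one line and be done.

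The main obstacle, such as it is, is the bookkeeping in the second step — making the ``at least one partner lies outside $P_1$'' argument airtight. The cleanest way is to note that the three edges of $E(\mcQ_2)$ meeting $\{v,v',v''\}$ are pairwise distinct and pairwise vertex-disjoint (distinct $2$-paths of $\mcQ_2$ share no vertex), so they have three distinct other-endpoints; if all three of those endpoints lay on $P_1$ they would have to be among $\{v,v',v''\}$, forcing these three $2$-paths of $\mcQ_2$ to be edges within the vertex set $\{v,v',v''\}$, impossible for three vertex-disjoint edges on three vertices. Hence some other-endpoint $u$ is off $P_1$, and being incident to an edge of $E(\mcQ^*_3)$ it lies on some $3$-path $P_2 \ne P_1$ of $\mcQ^*_3$. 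Everything else is a direct citation of Phase~1 and Claim~\ref{clm3.3}.
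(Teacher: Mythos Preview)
Your proof is correct and follows exactly the paper's approach: the lemma is nothing more than the setup paragraph preceding Claim~\ref{clm3.3}, Claim~\ref{clm3.3} applied to each neighbour of $u$ on $P_2$, and the one-line observation that since such a neighbour $w$ lies in a path of $\mcQ_3$ (hence not in $\mcQ_2$), $P_2$ can receive token through at most two of its three vertices, giving $\tau_1(P_2)\le 1$. One tiny nit in your pigeonhole step: the three $2$-paths of $\mcQ_2$ incident to $v,v',v''$ need not be pairwise distinct (for instance $v$-$v'$ could itself be one of them), but if two coincide then the remaining vertex's $2$-path has its other endpoint forced off $P_1$ by vertex-disjointness of $\mcQ_2$, so the conclusion is unaffected.
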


\begin{figure}[ht]
\centering
\includegraphics[width=0.27\linewidth]{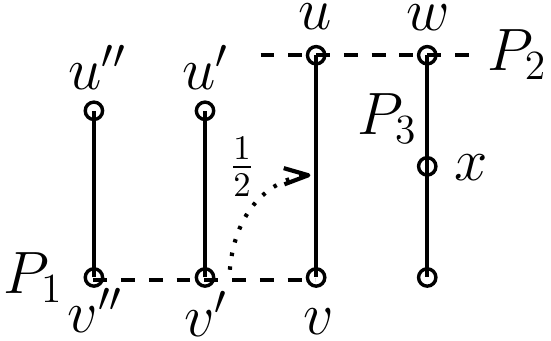}
\caption{An illustration of a $3$-path $P_1 = v$-$v'$-$v'' \in \mcQ_3^*$ with $\tau_1(P_1) = 3/2$, where $u$-$v$, $u'$-$v'$, $u''$-$v'' \in E(\mcQ_2)$.
	The vertex $u$ is on a distinct $3$-path $P_2 \in E(\mcQ^*_3)$;
	the vertex $w$ is on $P_2$ and is adjacent to $u$;
	$w$ is either the midpoint or an endpoint of a $3$-path $P_3 \in \mcQ_3$.\label{fig32}}
\end{figure}

The first step of Phase 2 is to transfer this extra $1/2$ token back from $P_1$ to the $2$-path $u$-$v$ through vertex $v$
(see Figure~\ref{fig32} for an illustration).
Thus, we have $\tau_2(P_1) = -1/2$ and $\tau(P_1) = 3/2 - 1/2 = 1$.

Using Lemma~\ref{lm3.4} and its notation, let $x_1$ and $y_1$ be the other two vertices on $P_3$ ({\it i.e.}, $P_3 = w$-$x_1$-$y_1$ or $P_3 = x_1$-$x$-$y_1$).
Denote $P_4 \in \mcQ^*$ ($P_5 \in \mcQ^*$, respectively) as the path where $x_1$ ($y_1$, respectively) is on.
Next, we will transfer the $1/2$ token from $u$-$v$ to the paths $P_4$ or/and $P_5$ through some {\em pipe} or {\em pipes}.

We define a {\em pipe} $r \to s \to t$,
where $r$ is an endpoint of a {\em source} $2$-path of $\mcQ_2$ ($u$-$v$ here) which receives $1/2$ token in the first step of Phase 2, 
$(r, s)$ is an edge on a $3$-path $P' \in \mcQ^*_3$ with $\tau_1(P') \le 1$ ($P' = P_2$ here),
$s$ and $t$ are both on a $3$-path of $\mcQ_3$ ($P_3$ here),
and $t$ is a vertex on the {\em destination} path of $\mcQ^*$ ($P_4$ or $P_5$ here) which will receive token from the source $2$-path of $\mcQ_2$.
That is, the pipe $r \to s \to t$ will transfer some token from the source $2$-path of $\mcQ_2$ to the destination path of $\mcQ^*$.
$r$ and $t$ are called the {\em head} and {\em tail} of the pipe, respectively.
For example, in Figure~\ref{fig33a}, there are four possible pipes $u \to w \to x_1$, $u \to w \to y_1$, $u'' \to w \to x_1$, and $u'' \to w \to y_1$.
We distinguish the cases,
in which the two destination paths $P_4$ and $P_5$ belong to different combinations of $\mcQ^*_1$, $\mcQ^*_2$, $\mcQ^*_3$,
to determine how they receive token from source $2$-paths through some pipe or pipes.

Recall that $u$ can be either an endpoint or the midpoint of $P_2$.
We distinguish the following cases with $u$ being an endpoint of $P_2$ (the cases for $u$ being the midpoint can be discussed the same), that is, $P_2 = u$-$w$-$u''$, depending on which of $\mcQ^*_1$, $\mcQ^*_2$, $\mcQ^*_3$ the two destination paths $P_4$ and $P_5$ belong to,
to determine the upper bounds on $\tau(P_4)$ and $\tau(P_5)$.

{\bf Case 1.}
At least one of $P_4$ and $P_5$ is a singleton of $\mcQ_1^*$, say $P_4 = x_1 \in \mcQ_1^*$ (see Figure~\ref{fig33} for illustrations).
In this case, we have $\tau_1(P_4) = 0$,
so we transfer the $1/2$ token from $u$-$v$ to $P_4$ through pipe $u \to w \to x_1$.
We observe that if $P_5$ is also a $3$-path of $\mcQ^*_3$, with $(y_1, y_2)$ being an edge on $P_5$,
then $y_2 \to y_1 \to x_1$ is a candidate pipe through which $P_4$ could receive another $1/2$ token.
We distinguish the following two sub-cases based on whether $w$ is an endpoint or the midpoint of $P_3$ to determine all the possible pipes through each of which could $P_4$ receive $1/2$ token.

\begin{figure}[ht]
\centering
\captionsetup[subfigure]{justification=centering}
\begin{subfigure}{0.3\textwidth}
\centering
\includegraphics[width=.85\linewidth]{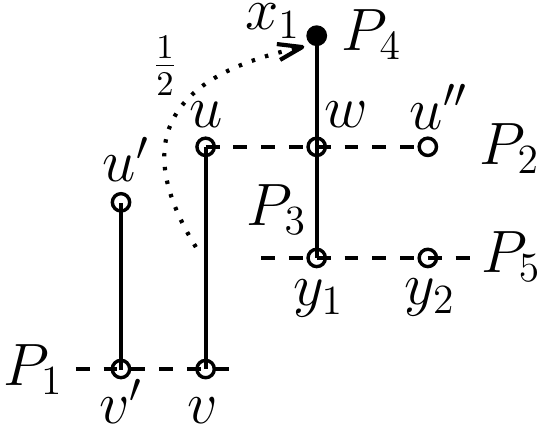}
\caption{\label{fig33a}}
\end{subfigure}
\hskip .15cm
\begin{subfigure}{0.3\textwidth}
\centering
\includegraphics[width=.85\linewidth]{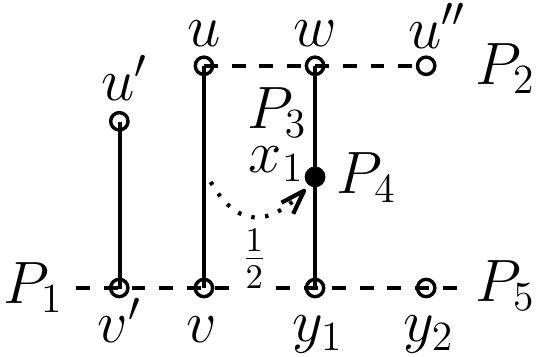}
\caption{\label{fig33b}}
\end{subfigure}
\hskip .15cm
\begin{subfigure}{0.3\textwidth}
\centering
\includegraphics[width=.85\linewidth]{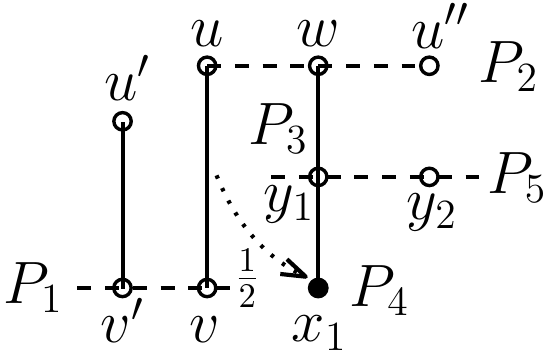}
\caption{\label{fig33c}}
\end{subfigure}
\caption{The cases when $P_4$ is a singleton of $\mcQ_1^*$, where solid edges are in $E(\mcQ_2)$ or $E(\mcQ_3)$ and dashed edges are in $E(\mcQ^*)$.
	$x_1$ is the tail of the pipe through which the destination path $P_4$ could receive $1/2$ token from the source $2$-path $u$-$v$.\label{fig33}}
\end{figure}

	{\bf Sub-case 1.1.} $w$ is the midpoint of $P_3 = x_1$-$w$-$y_1$ (see Figure~\ref{fig33a} for an illustration).
	If $P_5 \in \mcQ^*_3$, with $(y_1, y_2)$ being an edge on $P_5$,
	then $y_2$ cannot be on a $2$-path of $\mcQ_2$
	(suppose $y_2$ is on a $2$-path $P'' \in \mcQ_2$,
	then the three $2$-paths $u$-$v$, $u'$-$v'$, $P''$, and the $3$-path $P_3$ could be replaced due to {\sc Operation $3$-$1$-By-$0$-$3$}).
	Therefore, only through pipe $u'' \to w \to x_1$ could $P_4$ receive another $1/2$ token.
	Thus, $\tau_2(P_4) \le 1/2 \times 2 = 1$, implying $\tau(P_4) \le 0 + 1 = 1$.

	{\bf Sub-case 1.2.} $w$ is an endpoint of $P_3$, {\it i.e.}, either $P_3 = w$-$x_1$-$y_1$ (see Figure~\ref{fig33b} for an illustration)
	or $P_3 = w$-$y_1$-$x_1$ (see Figure~\ref{fig33c} for an illustration).
	In each sub-case, $u''$ cannot be the head of any pipe 
	({\it i.e.}, there does not exist a path $u''$-$v''$-$v'''$-$u'''$,
	where $u''$-$v''$, $v'''$-$u''' \in \mcQ_2$ and $v''$-$v''' \in \mcQ^*_2$,
	since otherwise, the four $2$-paths $u$-$v$, $u'$-$v'$, $u''$-$v''$, $v'''$-$u'''$, and the $3$-path $P_3$ could be replaced
	due to {\sc Operation $4$-$1$-By-$1$-$3$}).
	If $P_5 \in \mcQ^*_3$, with $(y_1, y_2)$ being an edge on $P_5$,
	then $y_2$ in Figure~\ref{fig33b} cannot be on a $2$-path of $\mcQ_2$
	(suppose $y_2$ is on a $2$-path $P'' \in \mcQ_2$,
	then the three $2$-paths $u$-$v$, $u'$-$v'$, $P''$, and the $3$-path $P_3$ could be replaced due to {\sc Operation $3$-$1$-By-$0$-$3$});
	$y_2$ in Figure~\ref{fig33c} cannot be the head of any pipe
	({\it i.e.}, there does not exist a path $y_2$-$z$-$z'$-$y'$,
	where $y_2$-$z$, $z'$-$y' \in \mcQ_2$ and $z$-$z' \in \mcQ^*_2$,
	since otherwise, the three $2$-paths $u$-$v$, $y_2$-$z$, $z'$-$y'$, and the $3$-path $P_3$ could be replaced due to {\sc Operation $3$-$1$-By-$0$-$3$}).
	Therefore, through no other pipe could $P_4$ receive any other token in either sub-case.
	Thus, $\tau_2(P_4) \le 1/2$, implying $\tau(P_4) \le 0 + 1/2 = 1/2$.

{\bf Case 2.}
Both $P_4$ and $P_5$ are paths of $\mcQ_2^* \cup \mcQ_3^*$ (see Figure~\ref{fig34} for illustrations).
We distinguish two sub-cases based on whether $w$ is an endpoint or the midpoint of $P_3$ to determine
how to transfer the $1/2$ token from the source $2$-path $u$-$v$ to $P_4$ or $P_5$ or both.

\begin{figure}[ht]
\centering
\captionsetup[subfigure]{justification=centering}
\begin{subfigure}{0.45\textwidth}
\centering
\includegraphics[width=.6\linewidth]{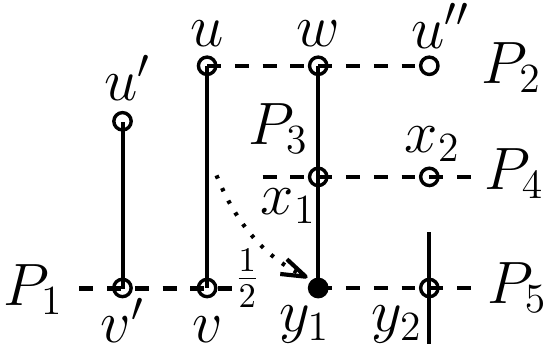}
\caption{\label{fig34a}}
\end{subfigure}
\hskip .15cm
\begin{subfigure}{0.45\textwidth}
\centering
\includegraphics[width=.58\linewidth]{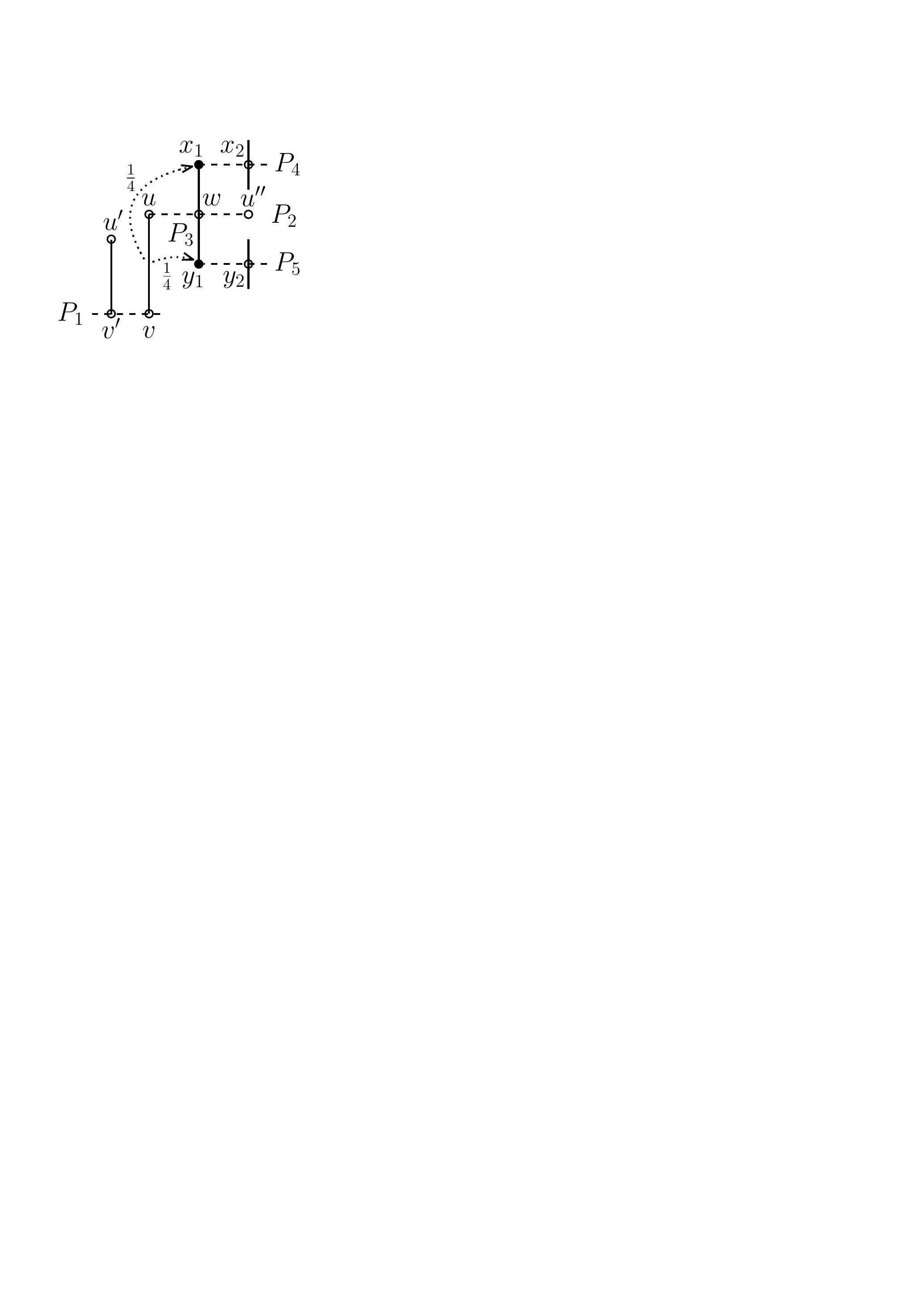}
\caption{\label{fig34b}}
\end{subfigure}
\caption{The cases when both $P_4$ and $P_5$ are in $\mcQ_2^* \cup \mcQ_3^*$, where solid edges are in $E(\mcQ_2)$ or $E(\mcQ_3)$ and dashed edges are in $E(\mcQ^*)$.
	In Figure~\ref{fig34a}, $y_1$ is the tail of the pipe through which $P_5$ receives $1/2$ token from the source $2$-path $u$-$v$;
	in Figure~\ref{fig34b}, $x_1$ is the tail of the pipe through which $P_4$ receives $1/4$ token from the source $2$-path $u$-$v$
	and $y_1$ is the tail of the pipe through which $P_5$ receives $1/4$ token from the source $2$-path $u$-$v$.\label{fig34}}
\end{figure}

	{\bf Sub-case 2.1.} $w$ is an endpoint of $P_3 = w$-$x_1$-$y_1$, with $y_1$ on $P_5$ (see Figure~\ref{fig34a} for an illustration).
	In this sub-case, we transfer the $1/2$ token from the source $2$-path $u$-$v$ to $P_5$ through pipe $u \to w \to y_1$. 
	Similar to the sub-case shown in Figure~\ref{fig33b},
	if $(y_1, y_2)$ is an edge on $P_5$, 
	then $y_2$ cannot be on a $2$-path of $\mcQ_2$ due to {\sc Operation $3$-$1$-By-$0$-$3$}.
	Thus, through no other pipe with tail $y_1$ could $P_5$ receive any other token.
	Therefore, $P_5$ could receive at most $1/2$ token through pipes with tail $y_1$.

	{\bf Sub-case 2.2.} $w$ is the midpoint of $P_3 = x_1$-$w$-$y_1$ (see Figure~\ref{fig34b}).
	In this sub-case, we break the $1/2$ token holding by the source $2$-path $u$-$v$ into two quarters,
	with $1/4$ transferred to $P_4$ through pipe $u \to w \to x_1$ and
	the other $1/4$ transferred to $P_5$ through pipe $u \to w \to y_1$. 
	Similar to the sub-case shown in Figure~\ref{fig33a}, 
	if $(x_1, x_2)$ is an edge on $P_4$ (or $(y_1, y_2)$ is an edge on $P_5$, respectively),
	then $x_2$ (or $y_2$, respectively) cannot be on a $2$-path of $\mcQ_2$ due to {\sc Operation $3$-$1$-By-$0$-$3$}.
	Thus, only through pipe $u'' \to w \to x_1$ could $P_4$ receive another $1/4$ token and
	only through pipe $u'' \to w \to y_1$ could $P_5$ receive another $1/4$ token.
	Therefore, $P_4$ ($P_5$, respectively) could receive at most $1/2$ token through pipes with tail $x_1$ ($y_1$, respectively).

Now we discuss if $P_4$ in Figure~\ref{fig34b} and $P_5$ in Figure~\ref{fig34a} and Figure~\ref{fig34b} could receive more token through pipes
with vertices other than $x_1$ and $y_1$ being the tail, respectively.
Let $(x_1, x_2)$ and $(y_1, y_2)$ be edges on $P_4$ and $P_5$, respectively.
We first prove the following two claims.

\begin{claim}
\label{clm3.5}
The vertex $y_2$ in Figure~\ref{fig34a} and in Figure~\ref{fig34b} is on a $3$-path of $\mcQ_3$; so is the vertex $x_2$ in Figure~\ref{fig34b}.
\end{claim}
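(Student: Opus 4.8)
The plan is to argue, exactly as in the proof of Claim~\ref{clm3.3}, by ruling out the two alternatives for $y_2$ (and for $x_2$): that it is a singleton of $\mcQ_1$, or that it lies on a $2$-path of $\mcQ_2$. In each case I would exhibit an applicable local operation, contradicting the fact that the algorithm {\sc Approx} has terminated. The starting point is the configuration recorded in Figures~\ref{fig34a} and~\ref{fig34b}: we have the $3$-path $P_1 = v$-$v'$-$v'' \in \mcQ_3^*$ with $\tau_1(P_1) = 3/2$, so each of $v, v', v''$ is matched by an edge of $E(\mcQ_2)$ to a vertex on some $3$-path of $\mcQ_3^*$; in particular $u$-$v, u'$-$v', u''$-$v'' \in E(\mcQ_2)$, the vertex $u$ lies on $P_2 \in \mcQ_3^*$ with neighbour $w$ on $P_2$, and $w$ lies on the $3$-path $P_3 \in \mcQ_3$ whose other two vertices are $x_1, y_1$, with $x_1$ on $P_4$ and $y_1$ on $P_5$.

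First I would dispose of the singleton case. If $y_2$ were a singleton of $\mcQ_1$, then since $(y_1,y_2)\in E$ and $y_1$ lies on the $3$-path $P_3\in\mcQ_3$, we could merge $y_2$ into $P_3$ by attaching it to the appropriate endpoint — after a rotation of $P_3$ if $y_1$ is its midpoint and the endpoints of $P_3$ are adjacent — obtaining a $3$-path partition with strictly fewer $1$-paths, contradicting that $\mcQ$ has the least $1$-paths. (One must check the rotation is available; if $y_1$ is an endpoint of $P_3$ this is immediate, and if $y_1$ is the midpoint the argument still goes through because then $y_2$-$y_1$ together with one half of $P_3$ is a $3$-path and the other vertex of $P_3$ becomes a singleton, still reducing the $1$-path count.) The same reasoning applies verbatim to $x_2$ in Figure~\ref{fig34b}.

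Next I would dispose of the $2$-path case. Suppose $y_2$ lies on a $2$-path $y_2$-$z \in \mcQ_2$. Then consider the three $2$-paths $u$-$v$, $u'$-$v'$ (both incident to $P_1$), and $y_2$-$z$, together with the $3$-path $P_3 \in \mcQ_3$ on which both $w$ and $y_1$ sit: since $w$ is adjacent to $u$ and $u'$ is adjacent to $v'$, and $y_1$ is adjacent to $y_2$, this is precisely the configuration of {\sc Operation $3$-$1$-By-$0$-$3$} (in Figure~\ref{fig34a}, where $w$ is an endpoint of $P_3$, this is the second-class/priority-$2.2$ configuration; in Figure~\ref{fig34b}, where $w$ is the midpoint of $P_3$, it is the first-class/priority-$2.1$ configuration), so the operation would apply, contradicting termination. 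The analogous argument for $x_2$ in Figure~\ref{fig34b} uses the $2$-paths $u$-$v$, $u''$-$v''$ and $x_2$-$x$ with the $3$-path $P_3$.

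The main obstacle I anticipate is bookkeeping rather than depth: one must verify that in each of the two figures the vertices $u, u', u'', v, v', v'', w, x_1, y_1$ are genuinely distinct (non-collision), so that the claimed local operation really is a valid expected collection on $2i_1+3i_2$ distinct vertices. Distinctness of $u,u',u''$ and $v,v',v''$ comes from $P_1\in\mcQ_3^*$ and $\mcQ_2$ being vertex-disjoint; $w\ne u', u''$ follows because $w$ is on $P_2\ne P_1$ while $u', u''$ are on $2$-paths matched to $P_1$ — this is the same non-collision observation already made in the proof of Claim~\ref{clm3.3} — and one similarly checks $x_1, y_1$ (on $P_3$) do not coincide with the others. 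Having established this, the contradiction with the termination of {\sc Approx} is immediate, which proves the claim.
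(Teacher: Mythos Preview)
Your overall strategy matches the paper's: rule out $y_2$ (resp.\ $x_2$) being a singleton by contradicting that $\mcQ$ has the least $1$-paths, and rule out $y_2$ (resp.\ $x_2$) lying on a $2$-path by exhibiting an applicable {\sc Operation $3$-$1$-By-$0$-$3$}. However, two details in your execution are incorrect.

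\textbf{Singleton case.} You propose to ``merge $y_2$ into $P_3$ by attaching it to the appropriate endpoint,'' but this produces a $4$-path, which is not allowed in a $3$-path partition. Your fallback when $y_1$ is the midpoint --- forming a $3$-path $y_2$-$y_1$-$\ast$ and leaving the remaining vertex as a singleton --- does \emph{not} reduce the number of $1$-paths (you trade one singleton for another). The paper's argument is different and correct: in both Figures~\ref{fig34a} and~\ref{fig34b} the vertex $y_1$ (and $x_1$ in~\ref{fig34b}) is an \emph{endpoint} of $P_3$, so the four vertices of $P_3$ together with $y_2$ form a $4$-path, which one splits into two $2$-paths; this strictly decreases the number of $1$-paths, giving the contradiction.

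\textbf{$2$-path case, $x_2$ in Figure~\ref{fig34b}.} You invoke the $2$-paths $u$-$v$ and $u''$-$v''$, but $(v,v'')$ need not be an edge of $P_1=v$-$v'$-$v''$; only $(v,v')$ and $(v',v'')$ are. Use $u$-$v$, $u'$-$v'$ instead (adjacent via $(v,v')$), together with the $2$-path through $x_2$ and the $3$-path $P_3$; the singleton $u$ is adjacent to the midpoint $w$ and the third $2$-path is adjacent to the endpoint $x_1$, which is the priority-$2.2$ (not $2.1$) configuration. With these fixes your argument goes through and coincides with the paper's.
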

\begin{proof}
Firstly, we have already proved in the discussion for Sub-cases 2.1 and 2.2 that $y_2$ in Figure~\ref{fig34a}, and in Figure~\ref{fig34b},
and $x_2$ in Figure~\ref{fig34b} cannot be on a $2$-path of $\mcQ_2$.
Suppose $x_2$ in Figure~\ref{fig34b} is a singleton of $\mcQ_1$, 
then the $3$-path $P_3$ and the edge $(x_1, x_2)$ could be reconnected into two $2$-paths, implying $\mcQ$ not a partition with the least $1$-paths, a contradiction.
This argument also applies to $y_2$ in Figure~\ref{fig34a} and Figure~\ref{fig34b}.
Thus, the claim is proved.
\end{proof}

Claim~\ref{clm3.5} implies that for $P_5$ in Figure~\ref{fig34a} or \ref{fig34b} ($P_4$ in Figure~\ref{fig34b}, respectively),
we have $\tau_1(P_5) \le 1/2$ ($\tau_1(P_4) \le 1/2$, respectively).

\begin{claim}
\label{clm3.6}
The vertex $y_2$ in Figure~\ref{fig34a} or in Figure~\ref{fig34b} cannot be the tail of a pipe; neither can the vertex $x_2$ in Figure~\ref{fig34b}.
\end{claim}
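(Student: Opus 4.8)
The plan is to argue by contradiction in each case, deriving from the assumption a local operation that would still apply to the final partition $\mcQ$ — impossible, since {\sc Approx} has terminated. I treat the statement for $y_2$ (in either figure); the one for $x_2$ in Figure~\ref{fig34b} is handled identically, with the triple $(x_1,x_2,P_4)$ in the role of $(y_1,y_2,P_5)$. Suppose $y_2$ were the tail of a pipe $r\to s\to y_2$. By the definition of a pipe, $r$ is an endpoint of a \emph{source} $2$-path $r$-$r'\in\mcQ_2$, the edge $(r,s)$ lies on a $3$-path $R_2\in\mcQ^*_3$, and $s$ and $y_2$ lie on a common $3$-path $\hat P_3\in\mcQ_3$ (consistent with Claim~\ref{clm3.5}, which already puts $y_2$ on a $3$-path of $\mcQ_3$). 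Being a source $2$-path, $r$-$r'$ is the $2$-path supplied by Lemma~\ref{lm3.4} for some $3$-path $R_1\in\mcQ^*_3$ with $\tau_1(R_1)=3/2$; so $r'$ lies on $R_1$, $r$ lies on $R_2\neq R_1$, and the neighbor $s$ of $r$ on $R_2$ lies on $\hat P_3$. Moreover $\tau_1(R_1)=3/2$ forces (the last case of Phase~1) each of the three vertices of $R_1$ onto a distinct $2$-path of $\mcQ_2$ whose other endpoint lies on a $3$-path of $\mcQ^*_3$; in particular, besides $r$-$r'$, $\mcQ_2$ contains a $2$-path $r_1$-$s_1$ with $s_1$ the neighbor of $r'$ on $R_1$, so that $r$-$r'$ and $r_1$-$s_1$ are adjacent in $G$ via the edge $(r',s_1)$ of $R_1$.

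Now I overlay this on the structure already fixed in Figures~\ref{fig34a}--\ref{fig34b}: the source $2$-path $u$-$v\in\mcQ_2$ has $v$ on $P_1\in\mcQ^*_3$ with $\tau_1(P_1)=3/2$, so, exactly as above, $\mcQ_2$ also contains a $2$-path $u'$-$v'$ adjacent to $u$-$v$ via an edge of $P_1$; the vertex $u$ has neighbor $w$ on $P_2\in\mcQ^*_3$ with $w$ on $P_3\in\mcQ_3$, and $y_1$ lies on both $P_3$ and $P_5$, with $(y_1,y_2)$ an edge of $P_5$. Re-linking the adjacent pair $u$-$v$, $u'$-$v'$ into a new $3$-path plus the singleton $u$, and the adjacent pair $r$-$r'$, $r_1$-$s_1$ into a new $3$-path plus the singleton $r$, and noting that $u$ is adjacent to $w\in P_3$ while $r$ is adjacent to $s\in\hat P_3$, I obtain an expected collection of {\sc Operation $4$-$2$-By-$1$-$4$}: its two $3$-paths are $P_3,\hat P_3\in\mcQ_3$, its four $2$-paths are $u$-$v$, $u'$-$v'$, $r$-$r'$, $r_1$-$s_1\in\mcQ_2$, and $(y_1,y_2)$ is the required additional edge between the two $3$-paths. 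Which of the three classes of the operation is realized depends only on whether $w$ and $s$ are endpoints or midpoints of $P_3$ and $\hat P_3$; in every case $(y_1,y_2)$ is of an admissible type, since $y_1$ (resp.\ $y_2$) is a vertex of $P_3$ (resp.\ $\hat P_3$) other than the one carrying the corresponding singleton. Hence {\sc Operation $4$-$2$-By-$1$-$4$} would still apply to $\mcQ$, a contradiction.

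The main obstacle is the bookkeeping of coincidences among the paths named above: one must separately dispose of the degenerate situations in which they are not all distinct — most notably $\hat P_3=P_3$, or $R_1=P_1$ (e.g.\ $r$-$r'=u$-$v$, which further forces $R_2=P_2$), or $P_4=P_5$ — where the right conclusion is instead that a higher-priority operation ({\sc Operation $3$-$0$-By-$0$-$2$}, {\sc $3$-$1$-By-$0$-$3$}, or {\sc $4$-$1$-By-$1$-$3$}) applies; and in each branch one must verify that the paths of the exhibited collection are genuinely distinct members of $\mcQ$ and that the new paths produced are vertex-disjoint. Invoking any one of these operations is legitimate precisely because {\sc Approx} has halted on $\mcQ$, certifying that no designed operation — of any priority — applies; so in every branch the assumption that $y_2$ is the tail of a pipe collapses, and the claim follows.
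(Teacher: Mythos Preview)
Your proposal is correct and takes essentially the same route as the paper: assume $y_2$ is the tail of a pipe, extract four $2$-paths and two $3$-paths of $\mcQ$ (your $u$-$v$, $u'$-$v'$, $r$-$r'$, $r_1$-$s_1$, $P_3$, $\hat P_3$ are the paper's $u$-$v$, $u'$-$v'$, $z_1$-$z_2$, $z_3$-$z_4$, $P_3$, $P'_3$) forming an expected collection for {\sc Operation $4$-$2$-By-$1$-$4$} via the edge $(y_1,y_2)$, contradicting termination of {\sc Approx}. You are in fact more careful than the paper in explicitly flagging the coincidence cases (e.g.\ $\hat P_3=P_3$); the paper's proof simply asserts the operation applies without addressing them.
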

\begin{proof}
We only prove that $y_2$ in Figure~\ref{fig34a} cannot be the tail of a pipe.
The same argument applies to $y_2$ in Figure~\ref{fig34b} and $x_2$ in Figure~\ref{fig34b}.
Suppose $y_2$ in Figure~\ref{fig34a} is the tail of a pipe, say $z_1 \to w' \to y_2$.
That is, $y_2$ and $w'$ are on the same $3$-path, say $P'_3$, of $\mcQ_3$;
there is a path $w'$-$z_1$-$z_2$-$z_3$-$z_4$, 
where $w'$-$z_1$, $z_2$-$z_3 \in \mcQ^*_2$ and $z_1$-$z_2$, $z_3$-$z_4 \in \mcQ_2$.
(See Figure~\ref{fig35} for an illustration.)
Then the four $2$-paths $u$-$v$, $u'$-$v'$, $z_1$-$z_2$, $z_3$-$z_4$, and the two $3$-paths $P_3$ and $P'_3$ could be replaced due to {\sc Operation $4$-$2$-By-$1$-$4$}.
Thus, the claim is proved.
\end{proof}
\begin{figure}[ht]
\centering
\includegraphics[width=.36\linewidth]{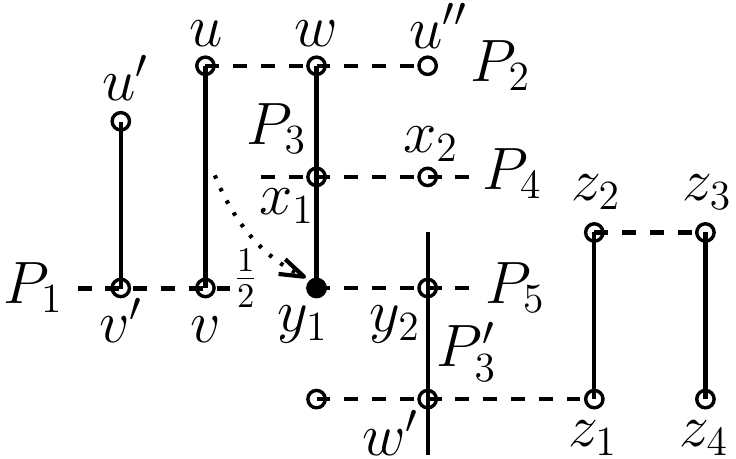}
\caption{An illustration of $y_2$ in Figure~\ref{fig34a} being the tail of a pipe, say $z_1 \to w' \to y_2$,
	which could never happen due to {\sc Operation $4$-$2$-By-$1$-$4$}.\label{fig35}}
\end{figure}

Claim~\ref{clm3.6} implies that
through no pipe with tail $y_2$ in Figure~\ref{fig34a} or in Figure~\ref{fig34b} 
(with tail $x_2$ in Figure~\ref{fig34b}, respectively) 
could $P_5$ ($P_4$, respectively) receive any other token.
That is, if $P_5$ ($P_4$, respectively) is a $2$-path or a $3$-path with $y_1$ ($x_1$, respectively) being the midpoint,
then it could receive token only through pipes with tail $y_1$ ($x_1$, respectively),
thus we have $\tau_2(P_5) \le 1/2$ ($\tau_2(P_4) \le 1/2$, respectively).

Next, we discuss the cases when $P_5$ in Figure~\ref{fig34a} is a $3$-path, with $y_1$ being an endpoint.
The same argument applies to 
the cases when $P_5$ in Figure~\ref{fig34b} is a $3$-path, with $y_1$ being an endpoint,
and the cases when $P_4$ in Figure~\ref{fig34b} is a $3$-path with $x_1$ being an endpoint.
Let $P_5 = y_1$-$y_2$-$y_3$ (see Figure~\ref{fig36} for an illustration).
According to Claim~\ref{clm3.6},
$P_5$ could only receive token through pipes with tail $y_1$ or $y_3$.
We distinguish the following three cases based on whether $y_3$ is on a path of $\mcQ_1$, or $\mcQ_2$, or $\mcQ_3$.

\begin{itemize}
\parskip=2pt
	\item If $y_3$ is a singleton of $\mcQ_1$, then we have $\tau_1(P_5) = 0$,
	and thus with the $1/2$ token received through pipe $u \to w \to y_1$,
	we have $\tau_2(P_5) \le 1/2$, implying $\tau(P_5) \le 1/2$.
	\item If $y_3$ is on a $2$-path of $\mcQ_2$, then we have $\tau_1(P_5) \le 1/2$,
	and thus with the $1/2$ token received through pipe $u \to w \to y_1$,
	we have $\tau_2(P_5) \le 1/2$, implying $\tau(P_5) \le 1$.
	\item If $y_3$ is on a $3$-path of $\mcQ_3$, then we have $\tau_1(P_5) = 0$.
	$y_3$ could either be the tail of a pipe as $y_1$ in Sub-case 2.1 (Figure~\ref{fig34a}),
	or be the tail of at most two pipes as $x_1$ or $y_1$ in Sub-case 2.2 (Figure~\ref{fig34b}).
	For any of these sub-cases,
	$P_5$ could receive at most $1/2$ token through pipes with tail $y_3$.
	Thus, with the $1/2$ token received through pipe $u \to w \to y_1$,
	we have $\tau_2(P_5) \le 1/2 + 1/2 = 1$, implying $\tau(P_5) \le 0 + 1 = 1$.
\end{itemize}

\begin{figure}[ht]
\centering
\includegraphics[width=.38\linewidth]{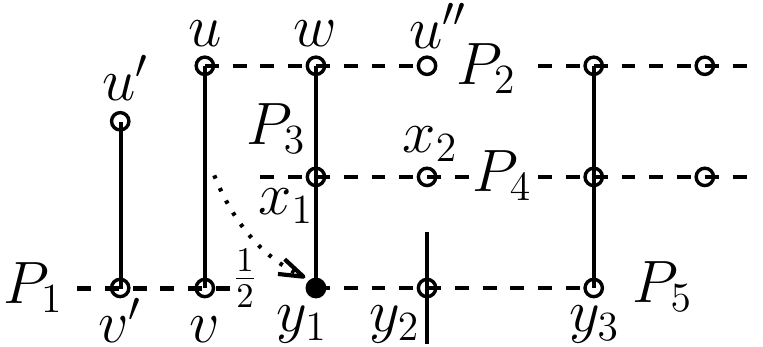}
\caption{An illustration of $y_1$ being an endpoint of $P_5$ in Figure~\ref{fig34a},
	where $P_5 = y_1$-$y_2$-$y_3$, solid edges are in $E(\mcQ)$ and dashed edges are in $E(\mcQ^*)$.
	$y_3$ could be on a path of $\mcQ_1$, $\mcQ_2$, or $\mcQ_3$.\label{fig36}}
\end{figure}

From the above discussions, we conclude that for any $P \in \{P_4, P_5\}$, 
if $\tau_2(P) > 0$, then we have $\tau_1(P) \le 1/2$ and $\tau_2(P) \le 1$, 
and it falls into one of the following four scenarios:

\begin{enumerate}
\parskip=2pt
	\item If $w$ is an endpoint of $P_3$ and $\tau_1(P) = 0$,
	then there are at most two pipes through each of which could $P$ receive $1/2$ token.
	That is, $\tau_2(P) \le 1/2 \times 2 = 1$, implying $\tau(P) \le 0 + 1 = 1$.
	\item If $w$ is an endpoint of $P_3$ and $\tau_1(P) = 1/2$,
	then only through one pipe could $P$ receive the $1/2$ token.
	That is, $\tau_2(P) \le 1/2$, implying $\tau(P) \le 1/2 + 1/2 = 1$.
	\item If $w$ is the midpoint of $P_3$ and $\tau_1(P) = 0$,
	then there are at most four pipes through each of which could $P$ receive $1/4$ token.
	That is, $\tau_2(P) \le 1/4 \times 4 = 1$, implying $\tau(P) \le 0 + 1 = 1$.
	\item If $w$ is an endpoint of $P_3$ and $\tau_1(P) = 1/2$,
	then there are at most two pipes through each of which could $P$ receive $1/4$ token.
	That is, $\tau_2(P) \le 1/4 \times 2 = 1/2$, implying $\tau(P) \le 1/2 + 1/2 = 1$.
\end{enumerate}

In summary, for any $P_1 \in \mcQ^*$ with $\tau_1(P_1) = 3/2$, 
we have $\tau_2(P_1) = -1/2$;
for any $P \in \mcQ^*$ with $\tau_2(P) > 0$,
we have $\tau_1(P) = 0$ if $\tau_2(P) \le 1$, or $\tau_1(P) \le 1/2$ if $\tau_2(P) \le 1/2$.
Therefore, at the end of Phase 2,
we have 
\begin{enumerate}
\parskip=2pt
	\item $\tau(P_i) \le 1$ for $\forall P_i \in \mcQ^*_1$,
	\item $\tau(P_j) \le 2$ for $\forall P_j \in \mcQ^*_2$,
	\item $\tau(P_\ell) \le 1$ for $\forall P_\ell \in \mcQ^*_3$.
\end{enumerate}
This proves Lemma~\ref{lm3.2}.

\subsection{A tight instance for the algorithm {\sc Approx}} \label{sec4.3}
Figure~\ref{fig37} illustrates a tight instance, in which our solution $3$-path partition $\mcQ$ contains nine $2$-paths and three $3$-paths (solid edges)
and an optimal $3$-path partition $\mcQ^*$ contains nine $3$-paths (dashed edges).
Each $3$-path of ${\cal Q}^*$ receives $1$ token from the $2$-paths in $\mcQ$ in our distribution process.
This instance shows that the performance ratio of $4/3$ is tight for {\sc Approx},
thus Theorem~\ref{thm3.1} is proved.

\begin{figure}[ht]
\centering
\includegraphics[width=0.65\linewidth]{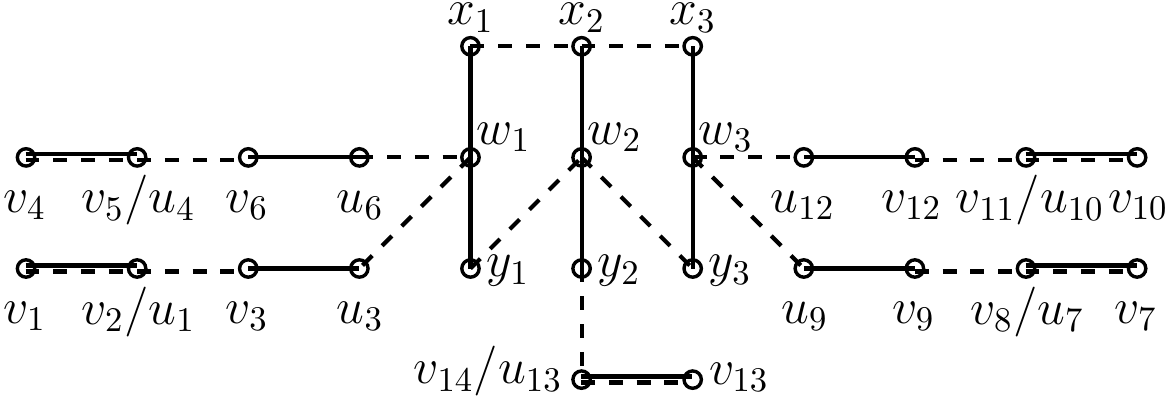}
\caption{A tight instance of $27$ vertices,
	where solid edges represent a $3$-path partition produced by {\sc Approx} and
	dashed edges represent an optimal $3$-path partition.
	The edges $(u_{3i+1}, v_{3i+1})$, $i = 0, 1, \ldots, 4$, are in $E(\mcQ_2) \cap E(\mcQ^*)$, shown in both solid and dashed.
	The vertex $u_{3i+1}$ collides into $v_{3i+2}$, $i = 0, 1, \ldots, 4$.
	In our distribution process, each of the nine $3$-paths in $\mcQ^*$ receives $1$ token from the $2$-paths in $\mcQ$. \label{fig37}}
\end{figure}

\section{Conclusions} \label{sec5}
We studied the {\sc $3$PP} problem and designed a $4/3$-approximation algorithm {\sc Approx}.
{\sc Approx} first computes a $3$-path partition $\mcQ$ with the least $1$-paths in $O(nm)$-time,
then iteratively applies four local operations with different priorities to reduce the total number of paths in $\mcQ$.
The overall running time of {\sc Approx} is $O(n^6)$.
The performance ratio $4/3$ of {\sc Approx} is proved through an amortization scheme, 
using the structure properties of the $3$-path partition returned by {\sc Approx}.
We also show that the performance ratio $4/3$ is tight for our algorithm.

The {\sc $3$PP} problem is closely related to the {\sc $3$-Set Cover} problem, but none of them is a special case of the other.
The best $4/3$-approximation for {\sc $3$-Set Cover} has stood there for more than three decades;
our algorithm {\sc Approx} for {\sc $3$PP} has the approximation ratio matches up to this best approximation ratio $4/3$.
We leave it open to better approximate {\sc $3$PP}.

\subsection*{Acknowledgement}
YC and AZ were supported by the NSFC Grants 11771114 and 11571252;
YC was also supported by the China Scholarship Council Grant 201508330054.
RG, GL and YX were supported by the NSERC Canada.
LL was supported by the China Scholarship Council Grant No. 201706315073, and the Fundamental Research Funds for the Central Universities Grant No. 20720160035.
WT was supported in part by funds from the College of Engineering and Computing at the Georgia Southern University.



\begin{thebibliography}{10}

\bibitem{CGL18}
Y.~Chen, R.~Goebel, G.~Lin, B.~Su, Y.~Xu, and A.~Zhang.
\newblock An improved approximation algorithm for the minimum $3$-path
  partition problem.
\newblock {\em Journal of Combinatorial Optimization}, 2018.
\newblock {\em Accepted for publication}.

\bibitem{DF97}
R.~Duh and M.~F\"{u}rer.
\newblock Approximation of $k$-set cover by semi-local optimization.
\newblock In {\em Proceedings of the Twenty-ninth Annual ACM Symposium on
  Theory of Computing}, STOC'97, pages 256--264, 1997.

\bibitem{Fei98}
U.~Feige.
\newblock A threshold of for approximating set cover.
\newblock {\em Journal of the ACM}, 45:634--652, 1998.

\bibitem{FR02}
D.~S. Franzblau and A.~Raychaudhuri.
\newblock Optimal hamiltonian completions and path covers for trees, and a
  reduction to maximum flow.
\newblock {\em ANZIAM Journal}, 44:193--204, 2002.

\bibitem{GJ79}
M.~R. Garey and D.~S. Johnson.
\newblock {\em Computers and Intractability: A Guide to the Theory of
  NP-completeness}.
\newblock W. H. Freeman and Company, San Francisco, 1979.

\bibitem{GK04}
A.~V. Goldberg and A.~V. Karzanov.
\newblock Maximum skew-symmetric flows and matchings.
\newblock {\em Mathematical Programming}, 100:537--568, 2004.

\bibitem{Joh74}
D.~S. Johnson.
\newblock Approximation algorithms for combinatorial problems.
\newblock {\em Journal of Computer and System Sciences}, 9:256--278, 1974.

\bibitem{Kar72}
R.~M. Karp.
\newblock Reducibility among combinatorial problems.
\newblock In {\em Proceedings of a Symposium on the Complexity of Computer
  Computations}. Springer, Boston, 1972.

\bibitem{Lev06}
A.~Levin.
\newblock Approximating the unweighted $k$-set cover problem: Greedy meets
  local search.
\newblock In {\em Proceedings of the 4th International Workshop on
  Approximation and Online Algorithms (WAOA 2006)}, LNCS 4368, pages 290--301,
  2006.

\bibitem{MT07}
J.~Monnot and S.~Toulouse.
\newblock The path partition problem and related problems in bipartite graphs.
\newblock {\em Operations Research Letters}, 35:677--684, 2007.

\bibitem{RS97}
R.~Raz and S.~Safra.
\newblock A sub-constant error-probability low-degree test, and sub-constant
  error-probability {PCP} characterization of {NP}.
\newblock In {\em Proceedings of the 29th Annual ACM Symposium on Theory of
  Computing (STOC'97)}, pages 475--484, 1997.

\bibitem{Vaz01}
V.~Vazirani.
\newblock {\em Approximation Algorithms}.
\newblock Springer, 2001.

\bibitem{YCH97}
J.-H. Yan, G.~J. Chang, S.~M. Hedetniemi, and S.~T. Hedetniemi.
\newblock $k$-path partitions in trees.
\newblock {\em Discrete Applied Mathematics}, 78:227--233, 1997.

\end{thebibliography}

\end{document}